\newtheorem{definition}{Definition}
\newtheorem{lemma}{Lemma}
\newtheorem{theorem}{Theorem}
\newtheorem{proposition}[theorem]{Proposition}
\theoremstyle{remark}
\title{\Large\bf Tight Bounds for $L_p$ Samplers, Finding Duplicates in
  Streams, and Related Problems}
\author[1]{Hossein Jowhari}
\author[1]{Mert Sa\u{g}lam}
\affil[1]{Simon Fraser University, Burnaby, Canada}
\author[1,2]{G\'abor Tardos}
\affil[2]{R\'enyi Institute of Mathematics, Budapest, Hungary}
\date{}
\DeclareMathOperator*{\E}{\mathbb{E}}
\DeclareMathOperator{\err}{Err}
\DeclareMathOperator{\URn}{UR^n}
\DeclareMathOperator{\polylog}{polylog}
\DeclareMathOperator{\poly}{poly}
\begin{document}
\maketitle
\sectionfont{\fontsize{11}{11}\selectfont}
\subsectionfont{\fontsize{10}{11}\selectfont}
\paragraphfont{\fontsize{9}{11}\selectfont}
\subparagraphfont{\fontsize{9}{11}\selectfont}
\minisecfont{\fontsize{9}{11}\selectfont}
\fontsize{9}{11}\selectfont
%
%
\setlength{\abstitleskip}{-\absparindent}
\thispagestyle{empty}
\vspace{-7.5mm}
\begin{abstract}
In this paper, we present near-optimal space bounds for $L_p$-samplers.
Given a stream of updates (additions and subtraction) to the coordinates of
an underlying vector $x \in \mathbb R^n$, a perfect $L_p$ sampler outputs
 the $i$-th coordinate with probability $|x_i|^p/\|x\|_p^p$.
In SODA 2010, Monemizadeh and Woodruff showed polylog space upper bounds for
 approximate $L_p$-samplers and demonstrated various applications of them.
Very recently, Andoni, Krauthgamer and Onak improved the
upper bounds and gave a $O(\epsilon^{-p}\log^3 n)$ space $\epsilon$
relative error and constant failure rate $L_p$-sampler for $p \in [1,2]$.
In this work, we give another such algorithm requiring only
$O(\epsilon^{-p}\log^2n)$ space for $p \in (1,2)$. For $p \in (0,1)$,
our space bound is  $O(\epsilon^{-1}\log^2n)$, while for the $p=1$ case we have
an $O(\log(1/\epsilon)\epsilon^{-1}\log^2n)$ space algorithm.
We also give a $O(\log^2 n)$ bits zero relative error
$L_0$-sampler, improving the $O(\log^3 n)$ bits algorithm due to  Frahling, Indyk and Sohler.


As an application of our samplers, we give better upper bounds
for the problem of finding duplicates in data streams. In case
the length of the stream is longer than the
alphabet size, $L_1$ sampling gives us an $O(\log^2 n)$ space
algorithm, thus improving the previous $O(\log^3 n)$ bound due to Gopalan
and Radhakrishnan.

In the second part of our work, we prove an $\Omega(\log^2 n)$ lower bound for
sampling
from 0, $\pm1$ vectors (in this special case, the parameter $p$ is not relevant for
$L_p$ sampling).
 This matches the space of our sampling algorithms for constant
$\epsilon>0$. We also prove tight space lower bounds for
the finding duplicates and
heavy hitters problems. We obtain these lower bounds
using reductions from the communication complexity problem augmented indexing.
\end{abstract}
  
\newpage
\setcounter{page}{1}
%
%
\section{Introduction}\label{sec:intro}


Sampling has become an indispensable tool in analysing massive data sets,
and particularly in processing data streams. In the past decade, 
several sampling techniques have been proposed and studied for the data stream
model \cite{BabcockDM02,DuffieldLT07,BravermanOZ09,CormodeMYZ10,MonemizadehW10,AndoniKO10}.
In this work, we study {\em $L_p$-samplers}, a new variant of
space efficient samplers for data streams that
was introduced by Monemizadeh and
Woodruff in \cite{MonemizadehW10}. 
 Roughly speaking, given a stream of updates (additions and subtraction) 
to the coordinates of an underlying vector $x \in \mathbb R^n$,
an $L_p$-sampler processes the stream and 
outputs a sample coordinate of $x$ where the
$i$-th coordinate is picked with probability proportional to
$|x_i|^p$.

In \cite{MonemizadehW10}, it was observed that $L_p$-samplers lead to 
alternative algorithms for many known streaming problems, including  
heavy hitters and frequency moment estimation. Here in this paper, we
focus on a specific application, namely finding duplicates in long streams;
although our $L_p$ samplers  work and often give better space performance
for all applications listed in \cite{MonemizadehW10}.
We refer the reader to \cite{MonemizadehW10} and \cite{AndoniKO10} 
for further applications of $L_p$-samplers.



Observe that we allow both negative and positive updates to the coordinates of the underlying vector.
In the case where only positive updates are allowed and $p=1$, the problem is well understood.
    For instance the classical reservoir sampling \cite{Knuth69} from the 60's
    (attributed to Alan G.~Waterman) gives a simple 
    solution as follows. Given a pair $(i,u)$, 
    indicating an addition of $u$ to the  $i$-th coordinate of 
    the underlying vector $x$, the sampler having maintained $s$, the sum of the 
    updates seen so far, replaces
     the current sample with $i$ with probability $u/s$, otherwise does nothing 
     and moves to the next update. It is easy to verify that this is a perfect
     $L_1$-sampler and the space usage is only $O(1)$ words.
    
   With the presence of negative updates, sampling becomes a non-trivial problem. In this case,
   it is not clear at all how to maintain samples without keeping track of the updates
    to the individual coordinates. In fact, the question
    regarding the mere existence of such samplers was raised few years ago by Cormode,
    Muthukrishnan, and Rozenbaum in \cite{CormodeMR05}. Last year in SODA 2010, 
    Monemizadeh and Woodruff \cite{MonemizadehW10} answered this question
    affirmatively, however in an approximate sense. Before stating 
     their results we give a formal definition of 
    $L_p$-samplers.

\begin{definition}
Let $x \in \mathbb{R}^n$ be a non-zero vector. For
$p>0$ we call the {\em $L_p$ distribution} corresponding to $x$ the
distribution on $[n]$ that takes $i$ with probability
$$\frac{|x_i|^p}{\|x\|_p^p},$$
with $\|x\|_p=(\sum_{i=1}^n|x_i|^p)^{1/p}$. For $p=0$,
the $L_0$ distribution
corresponding to $x$ is the uniform distribution over the non-zero coordinates
of $x$.
\end{definition}

We call a streaming algorithm a {\em perfect $L_p$-sampler} if it outputs an
index according to this distribution and fails only if $x$ is the zero
vector. An approximate $L_p$-sampler may fail but the distribution of its
output should be close to the $L_p$ distribution. In particular, we speak of an $\epsilon$ relative
error $L_p$-sampler if, conditioned on no failure, it outputs the index $i$
with probability $(1\pm\epsilon)|x_i|^p/\|x\|_p^p+O(n^{-c})$, where $c$ is an
arbitrary constant. For $p=0$ the corresponding formula is
$(1\pm\epsilon)/k+O(n^{-c})$, where $k$ is the number of non-zero coordinates
in $x$. Unless stated otherwise we assume
that the failure probability is at most $1/2$.

In this definition one can consider $c$ to be 2, but all existing
constructions of $L_p$-samplers work for an arbitrary $c$ with just a
constant factor increase in the space, so we will not specify $c$ in the
following and ignore errors of probability $n^{-c}$.
 
\paragraph{Previous work.} 
A zero relative error $L_0$-sampler which uses $O(\log^3 n)$ bits was shown in \cite{FrahlingIS05}. In \cite{MonemizadehW10}, the authors gave an $\epsilon$ relative error
$L_p$-sampler for $p \in [0,2]$ which uses $\poly(\epsilon^{-1},\log n)$ space. They also showed 
  a 2-pass $O(\polylog n)$ space zero relative error $L_p$-sampler for any $p\in [0,2]$. In addition
  to these, they demonstrated that $L_p$-samplers can be used as a black-box to obtain
   streaming algorithms for other problems such as $L_p$ estimation (for $p >2$), heavy hitters, 
   and cascaded norms \cite{JayramW09}.
   Unfortunately, due to the large exponents in their bounds, the
  $L_p$-samplers given there do not lead to efficient solutions for the aforementioned applications.
  
  Very recently, Andoni, Krauthgamer and Onak in 
\cite{AndoniKO10} improved the results of \cite{MonemizadehW10} 
 considerably. Through the adaptation of 
  a generic and simple method, named {\it precision sampling}, they managed
   to bring down the space upper bounds to $O(\frac1{\epsilon^p}\log^3 n)$ bits
   for $\epsilon$ relative error $L_p$-samplers for $p \in [1,2]$. Roughly speaking, the idea of precision
  sampling is to scale the input vector with random coefficients so that the $i$-th coordinate
  becomes the maximum with probability roughly proportional to $|x_i|^p$. Moreover the maximum (heavy)
   coordinate is found through a small-space heavy hitter algorithm. In more detail,
    the input vector $(x_1,\ldots,x_n)$ is scaled 
   by random coefficients $(t_1^{-1},\ldots,t_n^{-1})$, where each $t_i$ is picked uniformly 
   at random from $[0,1]$. Let $z=(x_1t_1^{-1},\ldots,x_nt_n^{-1})$ be the 
   scaled vector. 
 Here the 
   important observation is $\Pr[t_i^{-1} \ge t] =1/t$ 
   and hence, for instance,  by replacing $t$ with 
   $\|x\|_1/|x_i|$, we get $\Pr[|z_i| \ge \|x\|_1] = |x_i|/\|x\|_1$. (In the
   same manner, one can scale $x_i$ by $t_i^{-1/p}$
   instead of $t_i^{-1}$ and get a similar result for general $p$.) 
It turns out, we only need to we have a constant approximation to
$\|x\|_1$ and look for a coordinate in $z$ that has reached a
   limit of $\Omega(\|x\|_1)$. On the other hand it is shown that the heaviest coordinate in $z$ has
   a weight of $\Omega(\log^{-1}{n})\|z\|_1$ (with constant probability), and thus
   a small-space heavy hitter computation can be used to find the maximum. In particular, 
   the $L_p$-sampler of \cite{AndoniKO10} adapts the
   popular {\it count-sketch} scheme \cite{CharikarCF04} for this purpose.
   \paragraph{Our contributions.} In 
   this paper,  we  give $L_p$-samplers requiring only
$O(\epsilon^{-p}\log^2n)$ space for $p \in (1,2)$. For $p \in (0,1)$,
our space bound is  $O(\epsilon^{-1}\log^2n)$, while for the $p=1$ case we have
an $O(\log(1/\epsilon)\epsilon^{-1}\log^2n)$ space algorithm.
    In essence, our sampler follows the basic structure of the
     precision sampling method explained above. 
      However compared to \cite{AndoniKO10}, we
      do a sharper analysis of the error terms in the count-sketch, and through
    additional ideas, we manage to get rid of a log factor
      and preserve the previous dependence on $\epsilon$. 
      Roughly speaking, we use the fact that 
      the error term in the count-sketch is bounded by the $L_2$ 
      norm of the tail distribution of $z$ (the heavy coordinates do not contribute). On
      the other hand, taking 
     the distribution of the random coefficients into account, we
     bound this by $O(\|x\|_p)$, which enables us to save a 
     log factor. Additionally, to preserve
      the dependence on $\epsilon$, we have to use a slightly more 
     powerful source of randomness for choosing the scaling factors
     (in contrast with the pairwise-independence of \cite{AndoniKO10}), and 
     take care of some subtle issues regarding the conditioning on
       the error terms which are not handled in the previous work 
       (Lemma \ref{abort}).\footnote{       
       Further we note that our
algorithm not only produces a sample $i$ from the $L_p$ distribution,
but also approximates $x_i$. Similar approximation is also produced by
the $L_p$ sampler of \cite{AndoniKO10}, but they
claim to give an approximation of $|x_i|^p/||x||_p^p$.
However, this claim for $p<2$ cannot hold as it would contradict
with the $\Omega(\epsilon^{-2})$ space lower bound for estimating 
Hamming distance.}
  
     
   As $p$ approaches zero, precision sampling becomes very inefficient, as
  the random coefficients $t_i^{-1/p}$ tend to infinity. For the $p=0$ case,
 we present a completely different algorithm. 
   Briefly, our $L_0$-sampler tries to 
   detect a non-zero
   coordinate by picking random subsets of $[n]$.
   The non-zero coordinates are found
   by an exact sparse recovery procedure and Nisan's PRG \cite{Nisan} 
   is applied to decrease the randomness involved.
   Our $O(\log^2 n)$ space bound compares favorably to the previous algorithms, which use respectively $O(\log^3 n)$ space \cite{FrahlingIS05} and $\poly(\log n,\epsilon^{-1})$  space \cite{MonemizadehW10} (the latter one gives only $\epsilon$-relative error sampling).

    In Section~\ref{sec:lb}, we prove that sampling from 0, $\pm1$
   vectors requires $\Omega(\log^2 n)$ space, by a reduction from 
   the communication complexity problem augmented indexing. 
   In this special case $p$ is not relevant for $L_p$-sampling, hence 
   this shows that our $L_0$-sampling algorithm uses the optimal space up to constant 
   factors, and our $L_p$-sampler for $p\in(0,2)$ has the optimal space (up to constant factors) for $\epsilon>0$ a constant. 

Given a stream of length $n+1$ over the alphabet $[n]$, finding
duplicates problem asks to output some $a\in[n]$ that has appeared
at least twice in the stream. Observe that by the pigeon-hole principle, 
such $a$ always exists. Prior to our work, the best upper bound
for finding duplicates was due to  Gopalan
and Radhakrishnan \cite{GopalanJaikumar}, who gave a one-pass $O(\log^3
n)$ bits randomized algorithm with constant failure rate.
Here we settle the one-pass complexity of this problem by 
giving an $O(\log^2 n)$ space algorithm via a direct application of our $L_1$
sampler, and by giving an $\Omega(\log^2 n)$ lower bound afterwards. Combined with a sparse recovery procedure,
our solution also generalizes to a near-optimal $O(\log^2n+s\log n)$ space
algorithm for finding duplicates in streams of length $n-s$, improving on the $O(s\log^3 n)$ result of \cite{GopalanJaikumar}.



Finally, we prove lower bounds for the problem of finding heavy hitters in update streams, 
which is closely related to the $L_p$-sampling problem. This lower bound is also obtained 
by a reduction from the augment indexing and proves that any $L_p$ heavy 
hitters algorithm (defined in Section \ref{sec:hh}) must use 
$\Omega(\frac{1}{\phi^p}\log^2 n)$ space, even in the strict turnstile model. 
Our lower bound essentially matches the known upper bounds \cite{CormodeM05,CharikarCF04,KaneNPW} which work in the general update model. 

\paragraph{Related work.} In \cite{BabcockDM02,BravermanOZ09}, the authors have 
studied sampling from sliding windows, and the recent paper
of Cormode et al.\ \cite{CormodeMYZ10} generalizes the classical reservoir sampling 
to distributed streams. These works only support insertion streams.
 The basic idea of random scaling used in \cite{AndoniKO10} and in our paper has appeared earlier in
 the priority sampling technique \cite{DuffieldLT07,CohenDKLT09}, where the focus is
 to estimate the weight of certain subsets of a vector, defined by a sequence of positive updates.
 
Finding duplicates in streams was first considered 
in the context of detecting fraud in click streams \cite{MetwallyAA05}. 
Muthukrishnan in \cite{Muthukrishnan}
asked whether this problem can be solved in 
$O(\polylog n)$ space using a constant number of passes. In \cite{Tarui}, Tarui showed
that any $k$-pass deterministic algorithm must use $\Omega(n^{\frac1{2k-1}})$
space. 
 
Heavy hitter algorithms have been studied extensively. 
The work of Berinde et al.\
\cite{BerindeCIS09} gives tight lower bounds for heavy hitters under 
 insertion-only streams. We are not aware of similar works on
general update streams, although the recent works of \cite{BaIPW10, WoodruffJ11},
 where the authors show lower bounds for respectively approximate
sparse recovery, and Johnson-Lindenstrauss transforms (via augmented indexing) is closely related.

\paragraph{Notation.}
We write $[n]$ for the set $\{1,\ldots,n\}$.
An update stream is a sequence of tuples $(i,u)$, where $i\in [n]$ 
and $u\in\mathbb R$. The stream of updates implicitly define an $n$-dimensional
vector $x\in\mathbb R^n$ as follows. Initially, $x$ is the zero vector. An update of the form
$(i,u)$ adds $u$ to the coordinate $x_i$ of $x$ (leaving the other coordinates unchanged).
In the {\em strict turnstile model}
we are guaranteed that all coordinates of $x$ are non-negative at the end
of the stream
(although negative updates are still allowed), in the general model such
guarantee does not exist. Our
algorithms (like most other algorithms in the literature) work by maintaining
a linear sketch $L:\mathbb R^n\to\mathbb R^m$.
When computing the space requirement of such a
streaming algorithm, we assume all the updates are integers ($u\in\mathbb Z$)
and the coordinates of the vector $x$ throughout the stream remain bounded by
some value $M=\poly(n)$. We
make sure that the matrix of $L$ has also polynomially bounded
integer entries, this way maintaining $L(x)$ requires updating $m$ integer
counters and requires $O(m\log n)$ bits with fast
update time (especially since the matrices we consider are sparse). This
discretization step is standard and thus we omit most details.

In the standard model for randomized streaming algorithms the random bits used
(to generate the random linear map $L$, for example) are part of the space
bound. In contrast, our lower bounds do not make any assumption on the working of the
streaming algorithm and allow for the {\em random oracle model}, where the
algorithm is allowed free access to a random string at any time. All lower
bounds are proved through reductions from communication problems.

We say an event happens with {\em low probability} if the probability can
be made less than $n^{-c}$. Here $c>0$ is an arbitrary constant, for example
one can set $c=2$. The actual value of $c$ has limited effect on the space
of our algorithm: it changes only the unspecified constants hidden in the $O$
notation. We will routinely ignore low probability events, sometime even
$O(n)$ of them, which is okay as we leave $c$ unspecified.
Events complementary to low probability events are referred to as {\em high
probability} events.

For $0\le m\le n$ we call the vector $x\in\mathbb R^n$ {\em $m$-sparse} if
all but at most $m$ coordinates of $x$ are zero. We define
$\err_2^m(x)=\min\|x-\hat x\|_2$, where $\hat x\in\mathbb R^n$ ranges over all
the $m$-sparse vectors.



%
%
\section{The $L_p$ Sampler}\label{sec:lpsamp}
In this section, we present our $L_p$ sampler algorithm. In the following,
we assume  $p \in (0,2)$. This particular method does not seem to be
applicable for the $p=2$ case and we know of no $O(\log^2 n)$ space
$L_2$-sampling algorithm. We treat the $p=0$ case separately later.

We start by stating the properties of the two streaming algorithms we are
going to use. Both are based on maintaining $L(x)$
for a well chosen random linear map
$L:\mathbb R^n\to\mathbb R^{n'}$ with $n'<n$.

The {\em count-sketch} algorithm \cite{CharikarCF04} is so simple we cannot resist the
temptation to define it here. For parameter $m$, the count-sketch algorithm
works as follows. It selects independent samples $h_j:[n]\to[6m]$ and
$g_j:[n]\to\{1,-1\}$ from pairwise independent uniform hash families for
$j\in[l]$ and $l=O(\log n)$. It computes the following linear function of $x$
for $j\in[l]$ and $k\in[6m]$:
$y_{k,j}=\sum_{i\in[n],h_j(i)=k}g_j(i)x_i$. Finally it outputs
$x^*\in\mathbb R^n$ as an approximation of $x$ with
$x^*_i=\hbox{median}(g_j(i)y_{h(i),j}:j\in[l])$ for $i\in[n]$.

The performance guarantee of the count-sketch algorithm is as follows.
(For a compact proof see a recent survey by Gilbert and Indyk \cite{GilbertI10}.) 

\begin{lemma}\label{c-s}{\rm\cite{CharikarCF04}}
For any $x\in\mathbb R^n$ and $m\ge1$ we have
$|x_i-x^*_i|\le\err_2^m(x)/m^{1/2}$
for all $i\in[n]$ with high probability,
where $x^*$ is the output of the count-sketch algorithm with parameter $m$.
As a consequence we also have
$$\err_2^m(x)\le\|x-\hat x\|_2\le10\err_2^m(x)$$
with high probability, where $\hat x$ is the
$m$-sparse vector best approximating $x^*$ (i.e., $\hat x_i=x^*_i$ for the
$m$ coordinates $i$ with $|x^*_i|$ highest and is $\hat x_i=0$ for the
remaining $n-m$ coordinates).
\end{lemma}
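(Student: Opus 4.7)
The plan is to prove the two parts in sequence: first the pointwise bound $|x_i-x^*_i|\le\err_2^m(x)/m^{1/2}$ for all $i$, then deduce the $\|x-\hat x\|_2$ bound from it.

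For the pointwise bound, fix a coordinate $i$ and a repetition $j$. Write the single estimator as $g_j(i)y_{h_j(i),j}=x_i+E_{i,j}$, where the error term $E_{i,j}=\sum_{i'\ne i,\,h_j(i')=h_j(i)}g_j(i)g_j(i')x_{i'}$ collects all colliders. I would split $E_{i,j}$ according to whether $i'$ belongs to the set $H$ of indices of the $m$ largest $|x_{i'}|$. For the \emph{heavy} part, pairwise independence of $h_j$ and the fact that $|H|\le m$ in a range of size $6m$ give $\Pr[\exists\,i'\in H\setminus\{i\}:h_j(i')=h_j(i)]\le 1/6$ by a union bound. For the \emph{light} part $E^{\text{light}}_{i,j}$, pairwise independence of $g_j$ makes it a zero-mean sum and kills all cross terms in its second moment, yielding $\E[(E^{\text{light}}_{i,j})^2]\le\|x_{-H}\|_2^2/(6m)=\err_2^m(x)^2/(6m)$, so Markov gives $\Pr[|E^{\text{light}}_{i,j}|\ge\err_2^m(x)/m^{1/2}]\le 1/6$. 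Thus a single estimator exceeds the target error with probability at most $1/3$. Taking the median over the $l=O(\log n)$ independent repetitions turns this into probability at most $n^{-c-1}$ by a standard Chernoff bound, and a union bound over $i\in[n]$ yields the claim simultaneously for all coordinates with high probability.

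For the second part, the lower bound $\err_2^m(x)\le\|x-\hat x\|_2$ is immediate since $\hat x$ is $m$-sparse. For the upper bound, condition on the high-probability event from part one and let $T^*$ be the support of $\hat x$ (the top-$m$ support of $x^*$) and $T$ be the top-$m$ support of $x$. I would split
\[
\|x-\hat x\|_2^2=\sum_{i\in T^*}(x_i-x^*_i)^2+\sum_{i\notin T^*}x_i^2\le\err_2^m(x)^2+\sum_{i\notin T^*}x_i^2,
\]
then compare $\sum_{i\notin T^*}x_i^2$ with $\err_2^m(x)^2=\sum_{i\notin T}x_i^2$ by swapping over $A=T\setminus T^*$ and $B=T^*\setminus T$ (which have equal size). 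For any $j\in A$, $i\in B$ the extremality of $T^*$ in $x^*$ gives $|x^*_j|\le|x^*_i|$, which combined with $|x_k-x^*_k|\le\err_2^m(x)/m^{1/2}$ on both sides yields $x_j^2\le 2x_i^2+O(\err_2^m(x)^2/m)$. Pairing $A$ with $B$ and summing, together with $\sum_{i\in B}x_i^2\le\err_2^m(x)^2$ (since $B\cap T=\emptyset$), controls the discrepancy by $O(\err_2^m(x)^2)$, finishing the bound (with room to spare in the constant $10$).

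The routine part is the variance computation and the median amplification; the main obstacle is the second part, since $\hat x$ is chosen greedily from $x^*$ rather than from $x$, so the supports $T^*$ and $T$ need not agree and one must control the mismatch via the $L_\infty$ approximation from part one. The swap argument above is the natural way to handle this and loses only constant factors.
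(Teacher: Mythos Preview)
Your argument is correct and is exactly the standard count-sketch analysis: Chebyshev on the light part plus a union bound on the heavy part for a single row, median amplification across rows, and then the swap argument over $T\setminus T^*$ versus $T^*\setminus T$ to pass from the $\ell_\infty$ guarantee to the tail-$\ell_2$ guarantee. The constants you sketch close comfortably under the stated factor $10$.

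There is nothing to compare against in the paper itself: Lemma~\ref{c-s} is quoted from \cite{CharikarCF04} without proof, with a pointer to the survey \cite{GilbertI10} for details. Your write-up is essentially the proof one finds in those references.
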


We will also need the following result for the estimation of $L_p$ norms.

\begin{lemma}\label{norm}{\rm\cite{KaneNW10}}
For any $p\in(0,2]$ there is a streaming algorithm based on a random linear
map $L:\mathbb R^n\to\mathbb R^l$ with $l=O(\log n)$ that outputs a value $r$
computed solely from $L(x)$ that satisfies
$\|x\|_p\le r\le2\|x\|_p$
with high probability.
\end{lemma}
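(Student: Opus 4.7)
The plan is to invoke the classical $p$-stable sketch of Indyk, refined by Kane, Nelson and Woodruff. Recall that for every $p\in(0,2]$ there exists a $p$-stable distribution $D_p$, i.e.\ a distribution such that if $X_1,\ldots,X_n$ are i.i.d.\ samples from $D_p$ then $\sum_i x_i X_i$ has the same distribution as $\|x\|_p X$, where $X\sim D_p$. (For $p=2$ this is the Gaussian; for $p=1$ it is the Cauchy.) I would take $l=\Theta(\log n)$ rows $X_{j,\cdot}\in\mathbb R^n$ whose entries are i.i.d.\ samples from $D_p$ and define $L:\mathbb R^n\to\mathbb R^l$ by $L_j(x)=\sum_{i\in[n]}X_{j,i}x_i$. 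By $p$-stability, each coordinate $L_j(x)$ is distributed as $\|x\|_p Z_j$, where the $Z_j$ are i.i.d.\ copies of $|X|$, $X\sim D_p$.

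The estimator is then $r=\operatorname{median}(|L_1(x)|,\ldots,|L_l(x)|)/M$, where $M$ is the (fixed, depending only on $p$) median of $|X|$. A standard Chernoff bound shows that for $l=C\log n$ with $C$ large enough, the empirical median is within a $(1\pm\tfrac14)$ factor of the true median $M\|x\|_p$ with probability at least $1-n^{-c}$: indeed, for any fixed $\alpha<M<\beta$ with $\beta/\alpha\le 1+O(1)$, the number of $|L_j(x)|$ lying above $\|x\|_p\beta$ (respectively below $\|x\|_p\alpha$) is concentrated around $l\Pr[|X|>\beta]$ (respectively $l\Pr[|X|<\alpha]$), both of which are bounded away from $1/2$ by a constant; Hoeffding then gives exponential concentration. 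Finally, one multiplies $r$ by a suitable constant slightly larger than $1$ so that the two-sided $(1\pm\tfrac14)$-approximation becomes the one-sided bound $\|x\|_p\le r\le 2\|x\|_p$ required by the lemma.

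The main technical obstacle is discretization: entries of $D_p$ are continuous, so sampling and storing them exactly is impossible in a streaming model with polynomially bounded integer updates. Two issues must be addressed simultaneously: each $X_{j,i}$ must be representable in $O(\log n)$ bits, and one cannot afford to store all $nl$ of them explicitly. I would handle the first by truncating the tails of $D_p$ (dropping an event of low probability) and rounding each sample to $O(\log n)$ bits; a union bound over the $nl=\poly(n)$ entries shows that the truncation introduces only a low-probability error in $r$. For the second issue, I would generate the $X_{j,i}$ via Nisan's PRG (exactly as in the Kane--Nelson--Woodruff construction), seeded by $O(\log^2 n)$ bits, and verify that the estimator's acceptance event is computable by a small-space tester, so that Nisan's theorem transfers the high-probability guarantee from truly random bits to pseudorandom bits.

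Altogether the algorithm maintains the $l=O(\log n)$ counters $L_j(x)$, each updated in $O(\log n)$ bits per stream element after unpacking the pseudorandom entries $X_{j,i}$ on the fly. As the lemma only asks for a constant-factor approximation with high probability, no delicate $\epsilon$-dependence needs to be tracked, so once the discretization step above is in place, the concentration argument of the previous paragraph yields the claim directly.
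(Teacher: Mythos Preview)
The paper does not prove this lemma; it is simply cited from \cite{KaneNW10} and used as a black box. Your sketch is the standard Indyk $p$-stable median estimator and is a correct outline of how the cited result is obtained, at least as far as the statement goes (which only asserts that the sketch dimension is $l=O(\log n)$, not the total bit complexity of the algorithm).

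One minor point of attribution: the refinement in \cite{KaneNW10} is not ``use Nisan's PRG'' as you write, but rather a proof that bounded independence of the $p$-stable entries already suffices for the median estimator; this is what drives the total space down to the optimal $O(\log n)$ bits for constant relative error. Indyk's original analysis used Nisan, which leaves an $O(\log^2 n)$-bit seed. The distinction is immaterial for the lemma as stated here, since only $l$ is being bounded, but it is worth getting right if you are going to invoke \cite{KaneNW10} by name.
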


Our streaming algorithm on Figure~1 makes use of a single count-sketch and two
norm estimate algorithms. The count-sketch is for the randomly scaled version
$z$ of the vector $x$. One of the norm approximation algorithms is for
$\|x\|_p$, the other one approximates $\err_2^m(z)$ through the almost equal
value $\|z-\hat z\|_2$. A standard $L_2$ approximation for $z$ works if we
modify $z$ by subtracting $\hat z$ in the recovery stage. One can get
arbitrary good approximations of $\err_2^m(x)$ this way.

\begin{figure}
\fontsize{9}{11}
  \selectfont
\fbox{
\begin{minipage}[b]{.95\textwidth}

{\bf Initialization Stage:} \\
1. For $0<p<2$, $p\ne1$ set $k=10\lceil1/|p-1|\rceil$ and
$m=O(\epsilon^{-\max(0,p-1)})$ with a large\\
~~~~~~~~~~~~~ enough constant factor.\\
2. For $p=1$ set $k=m=O(\log(1/\epsilon))$ with a large enough constant factor.\\
3. Set $\beta=\epsilon^{1-1/p}$ and $l=O(\log n)$ with a large enough constant factor.\\
4. Select $k$-wise independent uniform scaling factors
$t_i\in[0,1]$ for $i\in[n]$.\\
5. Select the appropriate random linear functions for the execution of the
count-sketch\\ algorithm and $L$ and $L'$ for the norm estimations in the processing stage.\\

{\bf Processing Stage:}\\
1. Use count-sketch with parameter $m$ for the scaled vector $z\in\mathbb R^n$
with $z_i=x_i/t_i^{1/p}$.\\
2. Maintain a linear sketch $L(x)$ as needed for the $L_p$ norm approximation
of $x$.\\
3. Maintain a linear sketch $L'(z)$ as needed for the $L_2$ norm estimation of
$z$.\\

{\bf Recovery Stage:} \\
1. Compute the output $z^*$ of the count-sketch and its best $m$-sparse
approximation $\hat z$.\\
2. Based on $L(x)$ compute a real $r$ with $\|x\|_p\le r\le2\|x\|_p$.\\
3. Based on $L'(z-\hat z)=L'(z)-L'(\hat z)$ compute a real $s$ with $\|z-\hat
z\|_2\le s\le2\|z-\hat z\|_2$.\\
4. Find $i$ with $|z^*_i|$ maximal.\\
5. If $s>\beta m^{1/2}r$ or $|z^*_i|<\epsilon^{-1/p}r$ output FAIL.\\
6. Output $i$ as the sample and $z^*_it_i^{1/p}$ as an approximation for
$x_i$.
\end{minipage}
}
\label{fig:lpsampler}
\caption{
Our approximate $L_p$-sampler with both success probability and relative error $\Theta(\epsilon)$}
\vspace{-5mm}
\end{figure}

First we estimate the probability that the algorithm aborts because $s>\beta
m^{1/2}r$. This depends on the scaling that resulted in $z$ and it will be important
for us that the bound holds even after conditioning on any one scaling factor.

\begin{lemma}\label{abort}
Conditioned on an arbitrary fixed value $t$ of $t_i$ for a single index
$i\in[n]$ we have $\Pr[s>\beta m^{1/2}r\mid t_i=t]=O(\epsilon+n^{-c})$.
\end{lemma}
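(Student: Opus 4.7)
The plan is to reduce the event $\{s>\beta m^{1/2}r\}$ to a tail bound on $\err_2^m(z)$ via the guarantees of the auxiliary algorithms, then control that tail through a threshold argument.

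First, applying Lemmas~\ref{c-s} and~\ref{norm}, with probability $1-O(n^{-c})$ we have $r\ge\|x\|_p$, $s\le 2\|z-\hat z\|_2$, and $\|z-\hat z\|_2\le 10\,\err_2^m(z)$. Thus outside a low-probability event, $\{s>\beta m^{1/2}r\}$ forces $\err_2^m(z)>\beta m^{1/2}\|x\|_p/20$, and it suffices to bound the probability of the latter event by $O(\epsilon)$.

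To this end, pick a threshold $T=C_0\|x\|_p/m^{1/p}$ with $C_0$ a large constant, and let $A=\{j:|z_j|>T\}$. Since $z_j=x_j/t_j^{1/p}$ and $t_j$ is uniform on $[0,1]$, the tail is $\Pr[|z_j|>T]=\min(1,(|x_j|/T)^p)$, giving $\E[|A|]\le\|x\|_p^p/T^p=m/C_0^p\ll m$. The $k$-wise independence of $\{t_j\}$, which after conditioning on $t_i=t$ degrades only to $(k-1)$-wise on the remaining indices with uniform marginals intact, lets a moment-based concentration estimate yield $|A|\le m$ except with probability $O(\epsilon)$. On this event the best $m$-sparse approximation of $z$ captures all of $A$, so
\begin{equation*}
\err_2^m(z)^2 \;\le\; \sum_{j\notin A} z_j^2 \;=\; \sum_j z_j^2\,\mathbf{1}[|z_j|\le T].
\end{equation*}

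A direct layer-cake computation using the explicit tail of $|z_j|$ gives, for $|x_j|\le T$, the bound $\E[z_j^2\,\mathbf{1}[|z_j|\le T]]\le \tfrac{p}{2-p}|x_j|^pT^{2-p}$ (and $0$ otherwise), so
\begin{equation*}
\E\Bigl[\,\sum_j z_j^2\,\mathbf{1}[|z_j|\le T]\Bigr]\;=\;O(T^{2-p}\|x\|_p^p)\;=\;O(m^{1-2/p}\|x\|_p^2).
\end{equation*}
The target $(\beta m^{1/2}\|x\|_p/20)^2=\Theta(\epsilon^{2-2/p}m\|x\|_p^2)$ exceeds this expectation in all three regimes thanks to the parameter settings. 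Converting to a high-probability statement requires one more step: since each summand is bounded above by $T^2$, I compute the $q$-th central moment of the sum using $q$-wise independence for some $q\le k-1$ and apply Markov. The parameter $k=10\lceil 1/|p-1|\rceil$ is tuned precisely so this moment bound gives a tail of $O(\epsilon)$.

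Finally, the conditioning on $t_i=t$ is essentially free: the argument uses $k$-wise independence, which survives as $(k-1)$-wise with uniform marginals conditional on $t_i$; and the single fixed contribution from index $i$ adds at most $|x_i|^2/t^{2/p}$ to the error sum if $i\notin A$, or is excluded if $i\in A$, neither affecting the asymptotic bound. The main obstacle is the concentration step: each regime of $p$ (namely $p<1$, $p=1$, and $p>1$) needs its own careful verification of how large $q$ must be to yield an $O(\epsilon)$ tail, and the case $p$ near $1$ is the most delicate. This is exactly why $k$ is chosen to grow like $1/|p-1|$ and why for $p=1$ the slightly larger $m=O(\log(1/\epsilon))$ is required.
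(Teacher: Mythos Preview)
Your overall structure matches the paper's proof: reduce via Lemmas~\ref{c-s} and~\ref{norm} to a tail bound on $\err_2^m(z)$, introduce a threshold $T$, control the number of ``heavy'' coordinates above $T$, and bound the truncated $L_2$ mass below $T$ by a moment argument. For $p>1$ your threshold $T=C_0\|x\|_p/m^{1/p}$ coincides (up to constants) with the paper's $T=\beta\|x\|_p$, since there $m=\Theta(\beta^{-p})$, and the two arguments are essentially the same.

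There is, however, a genuine gap for $p<1$. In that regime $m=O(1)$ is a fixed constant, so your threshold $T=C_0\|x\|_p/m^{1/p}$ is a constant multiple of $\|x\|_p$, and consequently $\E[|A|]\le m/C_0^p$ is a constant \emph{independent of $\epsilon$}. No moment-based concentration can then yield $\Pr[|A|\ge m]=O(\epsilon)$: take $x$ with $N$ equal nonzero coordinates and let $N\to\infty$; under full independence $|A|$ converges in distribution to $\mathrm{Poisson}(m/C_0^p)$, whose upper tail at $m$ is a fixed positive number. Since $k$-wise independence matches the first $k$ moments of the fully independent case, your moment argument cannot beat this constant. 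The paper avoids this by choosing $T=\beta\|x\|_p=\epsilon^{1-1/p}\|x\|_p$, which for $p<1$ is \emph{much larger} than $\|x\|_p$; this drives $\E[S']=\beta^{-p}=\epsilon^{1-p}\to 0$, and then the degree-$k$ bound $\Pr[S'\ge m]\le \E\!\big[\binom{S'}{k}\big]/\binom{m}{k}=O(\epsilon^{k(1-p)})$ is $O(\epsilon)$ precisely because $k=10\lceil 1/(1-p)\rceil$. In short, for $p<1$ the role of $k$ is not (only) in the truncated-sum concentration you highlight at the end, but already in the heavy-count bound---and that bound requires the paper's larger threshold so that the expected count depends on $\epsilon$.
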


\begin{proof}
First note that by Lemma~\ref{norm} we have $r\ge\|x\|_p$ and
$s\le2\|z-\hat z\|_2$ with high probability. By
Lemma~\ref{c-s} we have $\|z-\hat z\|\le10\err_2^m(z)$ also
with high probability. We may therefore assume that all of these inequalities
hold, and in particular
$r\ge\|x\|_p$ and $s\le20\err_2^m(z)$. It is therefore enough to bound the
probability that $20\err_2^m(z)>\beta m^{1/2}\|x\|_p$.

For simplicity (and without loss of generality) we assume that the fixed
scalar is $t_n=t$ and will freely use $i$ for indexes in $[n-1]$.

Let $T=\beta\|x\|_p$. For each $i\in[n-1]$ we define
two variables $z'_i$ and $z''_i$ determined by $z_i$ as follows. The indicator
variable $z'_i=1$ if $|z_i|>T$ and $0$ otherwise. We set
$z''_i=z_i^2(1-z'_i)/T^2\in[0,1]$. Let $S'=\sum_{i\in[n-1]}z'_i$ and
$S''=\sum_{i\in[n-1]}z''_i$. Note that $T^2S''=\|z-w\|_2^2$, where $w$ is
defined by $w_i=z_iz'_i$ for $i\in[n-1]$ and $w_n=z_n$. Here $w$ is
$(S'+1)$-sparse, so we have $\err_2^m(z)\le TS''^{1/2}$ unless $S'\ge m$.
It is therefore enough to bound the probabilities of the events
$S'\ge m$ and $S''>m\beta^2\|x\|_p^2/(20T)^2=m/400$, each with  $O(\epsilon)$.

We have $\E[z'_i]=|x_i|^p/T^p$, $\E[S']\le\beta^{-p}=\epsilon^{1-p}$. By our
choice of $m$ and the concentration of $S'$ provided by $k$-wise independence
we have $\Pr[S'\ge m]=O(\epsilon)$ as needed.
The calculation for $S''$ is similar. We have
$$\E[z''_i]<\int_{|x_i|^p/T^p}^\infty x_i^2t^{-2/p}T^{-2}dt=\frac
p{2-p}|x_i|^pT^{-p}.$$ Thus $\E[S'']\le\frac
p{2-p}\|x\|_p^pT^{-p}=O(\beta^{-p})=O(\epsilon^{1-p})$. Note that the $z''_i$
are not indicator variables as the $z'_i$, but they are still $k$-wise
independent random variables from $[0,1]$ and we can bound the probability of
large deviation for $S''$ as we did for $S'$. This completes the proof of the
lemma.
\end{proof}

The fact that our algorithm is an approximate $L_p$-sampler with both relative
error and success probability $\Theta(\epsilon)$ follows from the following
lemma. Indeed, if the probabilities were exactly $\epsilon|x_i|^p/r^p$ and
if $\|x\|_p\le r\le2\|x\|_p$ would always hold, we
would make no relative error and the success probability would be
$\E[\epsilon\|x\|_p^p/r^p]\ge\epsilon/2^p$.

\begin{lemma} \label{lps}
The probability that the algorithm of Figure~1 outputs the index
$i\in[n]$ conditioned on a fixed value for $r\ge\|x\|_p^p$ is
$(\epsilon+O(\epsilon^2))|x_i|^p/r^p+O(n^{-c})$. The
relative error of the estimate for $x_i$ is at most $\epsilon$ with high
probability.
\end{lemma}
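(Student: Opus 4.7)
The plan is to fix an index $i\in[n]$ and condition on the value of $r$ (satisfying $\|x\|_p\le r\le 2\|x\|_p$), then compute the probability that the algorithm outputs $i$ by conditioning on $t_i$ and integrating. First I would set up the high-probability event $E$ on which both the count-sketch bound of Lemma \ref{c-s} and the $L_2$-norm estimates of Lemma \ref{norm} hold. On $E\cap\{\text{no abort}\}$, the guarantee $s\le\beta m^{1/2}r$ combined with $|z^*_j-z_j|\le\err_2^m(z)/m^{1/2}\le O(s/m^{1/2})$ yields the uniform bound $|z^*_j-z_j|\le O(\beta r)=O(\epsilon^{1-1/p}r)$ for every $j\in[n]$. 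The complement of $E$ has probability $O(n^{-c})$, which accounts for the additive error in the conclusion. Using this uniform bound, the output event ``$|z^*_i|=\max_j|z^*_j|\ge\epsilon^{-1/p}r$ and no abort'' is equivalent, up to $O(\epsilon)$-slack in the thresholds, to: $|z_i|\ge\epsilon^{-1/p}r$ and $|z_j|\le|z_i|-2\beta r$ for all $j\ne i$.

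Next I would condition on $t_i=t$. Since $|z_i|=|x_i|/t^{1/p}$, the window of $t$ for which $|z_i|\ge\epsilon^{-1/p}r(1\pm O(\epsilon))$ is an interval $[0,\epsilon|x_i|^p/r^p\cdot(1\pm O(\epsilon))]$, of length $\epsilon|x_i|^p/r^p\cdot(1+O(\epsilon))$. Given $t$ in this window, the output probability is controlled by two bad events. First, Lemma \ref{abort} gives $\Pr[\text{abort}\mid t_i=t]=O(\epsilon+n^{-c})$; the conditional form of that lemma is exactly what makes the integration below legitimate. Second, the event that some $j\ne i$ has $|z_j|\ge\epsilon^{-1/p}r(1-O(\epsilon))$: since the $t_j$ for $j\ne i$ remain pairwise (in fact $k$-wise) independent of $t_i$, each has probability $(1+O(\epsilon))\epsilon|x_j|^p/r^p$, and a union bound gives total at most $(1+O(\epsilon))\epsilon\sum_{j\ne i}|x_j|^p/r^p\le\epsilon+O(\epsilon^2)$ using $\|x\|_p\le r$. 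Hence the conditional output probability on the valid range of $t$ is $1-O(\epsilon)$; integrating and matching the upper bound (which is simply the window length) yields $\Pr[i\text{ output}]=\epsilon|x_i|^p/r^p\cdot(1+O(\epsilon))+O(n^{-c})=(\epsilon+O(\epsilon^2))|x_i|^p/r^p+O(n^{-c})$.

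For the relative error, on the output event the estimate is $z^*_it_i^{1/p}$, with deviation $|z^*_it_i^{1/p}-x_i|\le|z^*_i-z_i|\,t_i^{1/p}\le\beta r\cdot t_i^{1/p}$. The threshold $|z^*_i|\ge\epsilon^{-1/p}r$ together with $|z^*_i-z_i|\le\beta r$ forces $|z_i|\ge\epsilon^{-1/p}r(1-\epsilon)$, hence $t_i^{1/p}\le(1+O(\epsilon))\epsilon^{1/p}|x_i|/r$, so the error is at most $\beta r\cdot(1+O(\epsilon))\epsilon^{1/p}|x_i|/r=(1+O(\epsilon))\epsilon|x_i|$; tightening the suppressed constants in $\beta$ and the count-sketch parameters brings this to $\epsilon|x_i|$. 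The main obstacle I anticipate is coordinating the several $O(\epsilon)$ slacks---the $\beta r$ threshold perturbation, the abort probability in Lemma \ref{abort}, and the union bound over other coordinates---so that they combine into a single multiplicative $(1+O(\epsilon))$ factor rather than degrading the bound; the structural feature making this work is that all three bounds survive conditioning on the single scaling factor $t_i$.
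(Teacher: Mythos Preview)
Your proposal is correct and follows essentially the same approach as the paper: both isolate the same three error sources (abort via Lemma~\ref{abort} conditioned on $t_i$, the $\beta r$ slack between $z_i$ and $z^*_i$ near the threshold, and a union bound over competing indices $j\ne i$), and both derive the relative-error claim from $|z^*_i-z_i|\le\beta r$ together with $|z^*_i|\ge\epsilon^{-1/p}r$. The only cosmetic difference is that the paper phrases the relative-error step directly as $\beta r/(\epsilon^{-1/p}r)=\beta\epsilon^{1/p}=\epsilon$ for $z_i$ and then notes that multiplying by $t_i^{1/p}$ preserves relative error, whereas you route through $t_i^{1/p}$ explicitly; the content is the same.
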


\begin{proof} Optimally, we would output $i\in[n]$ if
$|z_i|>\epsilon^{-1/p}r$. This happens if $t_i<\epsilon|x_i|^p/r^p$ and
has probability exactly $\epsilon|x_i|^p/r^p$. We have to estimate the
probability that something goes wrong and the algorithm outputs $i$ when this
simple condition is not met or vice versa.

Three things can go wrong. First, if $s>m^{1/2}\beta r$ the algorithm
fails. This is only a problem for our calculation if it should, in fact,
output the index $i$. Lemma~\ref{abort} bounds the conditional
probability of this happening.

Having dealt with the $s>\beta m^{1/2}r$ case we may assume now $s\le\beta
m^{1/2} r$. We also make the assumptions (high probability by
Lemma~\ref{norm}) that
$\|z-\hat z\|_2\le s$ and thus $\err_2^m(z)\le\|z-\hat z\|_2\le s\le\beta
m^{1/2}r$. Finally, we also assume $|z^*_i-z_i|\le\err_2^m(z)/m^{1/2}\le\beta
r$ for all $i\in[n]$. This is satisfied with high probability by
Lemma~\ref{c-s}.

A second source of error comes from this $\beta r$ possible difference
between $z^*_i$ and $z_i$. This can only make a problem if $t_i$ is
close to the threshold, namely $(\epsilon^{-1/p}+\beta)^{-p}|x_i|^p/r^p\le
t_i\le (\epsilon^{-1/p}-\beta)^{-p}|x_i|^p/r^p$. The probability of selecting
$t_i$ from this interval is
$O(\beta/\epsilon^{1+1/p}|x_i|^p/r^p)=O(\epsilon^2|x_i|^p/r^p)$ as required.

Finally, the third source of error comes from the possibility that $i$ should
be output based on $|z_i|>\epsilon^{-1/p}r$, yet we output another index
$i'\ne i$ because $z^*_{i'}\ge z^*_i$. In this case we
must have $t_{i'}<(\epsilon^{-1/p}-\beta)^{-p}|x_i|^p/r^p$. This has
probability $O(\epsilon|x_{i'}|^p/r^p)$. By the union bound the probability
that such an index $i'$ exists is
$O(\epsilon\|x\|_p^p/r^p)=O(\epsilon)$. Pairwise independence is enough to
conclude that the same bound holds after conditioning on
$|z_i|>\epsilon^{-1/p}r$. This finishes the proof of the first statement of
the lemma.

The algorithm only outputs an index $i$ if $s\le\beta m^{1/2}r$ and
$|z^*_i|\le\epsilon^{-1/p}r$. The first implies that the absolute
approximation error for $z_i$ is at most $\beta r$, while the second lower
bounds the absolute value of the approximation itself by $\epsilon^{-1/p}r$,
thus ensuring a $\beta\epsilon^{1/p}=\epsilon$ relative error
approximation. Our approximation for $x_i=z_it_i^{1/p}$ is $z^*_it^{1/p}$, so
the relative error is the same. Note that the
inverse polynomial error probability comes from the various low probability
events we neglected. The same is true for the additive error term in the
distribution.
\end{proof}

\begin{theorem}\label{thm:sampler} For $\delta>0$ and $\epsilon>0$, $0<p<2$ 
 there is an $O(\epsilon)$ relative
  error one pass $L_p$-sampling algorithm with failing probability at most $\delta$ and having
  low probability that the relative error of the estimate for the selected
  coordinate is more than $\epsilon$. 
    The algorithm uses
  $O_p(\epsilon^{-\max(1,p)}\log^2n\log(1/\delta))$ space for $p\ne1$ while
  for $p=1$ the space is
  $O(\epsilon^{-1}\log(1/\epsilon)\log^2n\log(1/\delta))$.
\end{theorem}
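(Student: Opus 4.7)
The approach is amplification: the algorithm of Figure~1, together with Lemma~\ref{lps}, already gives a ``weak'' $L_p$-sampler whose per-run success probability (i.e.\ probability of not outputting FAIL) is $\Theta(\epsilon)$, and which, conditioned on success, produces an index that is $O(\epsilon)$-close in relative error to the $L_p$ distribution of $x$. To boost the success probability to $1-\delta$, I plan to run $N=O(\epsilon^{-1}\log(1/\delta))$ independent copies of the Figure~1 sampler in parallel on the same stream, each using fresh randomness, and output the index produced by the first copy that does not FAIL; if all $N$ copies fail, output FAIL.

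For the success probability, a single copy fails with probability $1-\Theta(\epsilon)$, so all $N$ copies fail with probability at most $(1-c\epsilon)^N\le\delta$ once the constant in $N$ is chosen large enough. For the distribution, a direct computation shows that the probability that the first non-failing copy reports index $i$, conditioned on at least one copy succeeding, is exactly $p_i/q$, where $p_i$ is the probability a single copy outputs $i$ and $q=\sum_i p_i$ is the single-copy success probability. Combining this with the per-copy guarantee of Lemma~\ref{lps}, $p_i=(\epsilon+O(\epsilon^2))|x_i|^p/r^p+O(n^{-c})$ for any fixed $r\ge\|x\|_p$, gives $p_i/q=(1\pm O(\epsilon))|x_i|^p/\|x\|_p^p+O(n^{-c'})$, as required. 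The $\epsilon$-relative error of the estimate of $x_i$ carries over directly from Lemma~\ref{lps}, again up to a low-probability event handled by a union bound over the $N$ copies (absorbed into the unspecified constant $c$).

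For the space, I would account for each copy separately. A single copy maintains a count-sketch with parameter $m$ and $O(\log n)$ rows, costing $O(m\log^2 n)$ bits, two norm estimators from Lemma~\ref{norm} costing $O(\log^2n)$ bits each, and hash functions / seeds for a $k$-wise independent family of scaling factors, costing $O(k\log n)$ bits. Inserting the values of $m$ and $k$ from the three cases in Figure~1 yields a single-copy space of $O_p(\log^2 n)$ for $p\in(0,1)$, $O_p(\epsilon^{-(p-1)}\log^2 n)$ for $p\in(1,2)$, and $O(\log(1/\epsilon)\log^2 n)$ for $p=1$. Multiplying by $N=O(\epsilon^{-1}\log(1/\delta))$ gives exactly the three bounds claimed in the theorem statement.

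The main obstacle I anticipate is not the amplification itself but keeping the additive $n^{-c}$ error terms under control after dividing by $q=\Theta(\epsilon)$ and after taking a union bound over $N$ copies and the various low-probability events invoked in Lemmas~\ref{c-s}, \ref{norm}, \ref{abort}, \ref{lps}. This is handled in the standard way: since the original constant $c$ in ``low probability'' is arbitrary, one inflates it to cover all the lost factors ($N$, $1/\epsilon$, $n$) once, with only a constant-factor increase in the space.
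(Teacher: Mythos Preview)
Your proposal is correct and follows essentially the same approach as the paper: run $v=O(\epsilon^{-1}\log(1/\delta))$ independent parallel copies of the Figure~1 algorithm, take the first non-failing output, and observe that the conditional distribution on success is $(1+O(\epsilon))|x_i|^p/\|x\|_p^p+O(n^{-c})$ independently of $r$ (and hence of the number of repetitions). The paper adds only one implementation detail you omit---truncating the scaling factors by declaring failure when $t_i^{-1}>n^c$ so that discretization goes through---but this is minor and does not affect the argument or the space bound.
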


\begin{proof}
Using Lemma~\ref{lps} and the fact that $\|x\|_p\le r\le2\|x\|_p$ with high
probability one obtains that the failure probability of the algorithm in Figure~1 
is at most $1-\epsilon/2^p+O(n^{-c})$. Conditioning on
obtaining an output, returning $i$ has probability
$(1+O(\epsilon))|x_i|^p/\|x\|_p^p+O(n^{-c})$. Clearly, the latter statement
remains true for any number of repetitions and the failure probability is
raised to power $v$ for $v$ repetitions. Thus using
$v=O(\log(1/\delta)/\epsilon)$ repetitions (taking the first non-failing output),
 the algorithm is an $O(\epsilon)$
relative error $\delta$ failure probability $L_p$-sampling algorithm. Here we  
assume $v<n$ as otherwise recording the entire vector $x$ is more efficient.

The low probability of more than $\epsilon$ relative error in estimating $x_i$
also follows from Lemma~\ref{lps}.
In one round, the algorithm on Figure~1 uses
$O(m\log n)$ counters for the count-sketch and this dominates the
counters for the norm estimators. Using standard discretization this can be
turned into an $O(m\log^2n)$ bit algorithm. For the discretization we also
have to keep our scaling factors polynomial in $n$. Recall that in the
continuous model these factors $t_i^{-1/p}$ were unbounded. But we can safely
declare failure if $t_i^{-1}>n^c$ for some $i\in[n]$ as this has low
probability $n^{1-c}$. We have to do the $v$ repetitions of the algorithm
in parallel to obtain a single pass streaming algorithm. This increases the
space to $O(vm\log^2n)$ which is the same as the one claimed in the theorem.
\end{proof}

Note that the hidden constant in the space bound of the theorem depends on
$p$, especially that $1/(2-p)$, $1/p$ and $1/|1-p|$ factors come in. The last
can
always be replaced by a $\log(1/\epsilon)$ factor but the former ones are
harder to handle. For $p=2$ an extra $\log n$ factor seems to be necessary for
an algorithm along these lines, see \cite{AndoniKO10}.

As we will see in Theorem~\ref{lpl}, our space bound is tight for $\epsilon$
and $\delta$ constants. Note that the lower bound holds also if we only
require the overall distribution of the $L_p$-sampler to be close to the $L_p$
distribution as opposed to the much more strict definition of $\epsilon$
relative error sampling.

%
%
\subsection{The $L_0$ Sampler}
  For $p$ near zero, the method of precision sampling
   becomes intractable. This is because our scaling factors are
   $t_i^{-1/p}$ which clearly rules out $p=0$.
   In the following we present a $L_0$ using a different approach.
First we state the following well-known result on exact recovery of sparse vectors.
\begin{lemma}\label{lem:sparse}
For $1\le s\le n$ and $k=O(s)$ there is a random linear function
$L:\mathbb R^n\to\mathbb R^k$ (generated from $O(k\log n)$ random bits) and a
recovery procedure that on input $L(x)$ outputs $x'\in\mathbb R^n$ or DENSE,
satisfying that for any $s$-sparse $x$ the output is $x'=x$ with
probability $1$, otherwise the output is DENSE with
high probability.
\end{lemma}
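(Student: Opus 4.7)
The plan is to combine Prony's method with a random polynomial fingerprint: the Prony block deterministically recovers $x$ exactly whenever $x$ is $s$-sparse, while the fingerprint flags the rare cases where the Prony stage misleadingly returns a spurious candidate on non-sparse input. Choose a prime $p = \poly(n)$ larger than $n$ and larger than the coordinate magnitude bound $M$. The sketch $L:\mathbb{Z}^n\to\mathbb{F}_p^k$ with $k = 2s+O(1)$ has two parts: the deterministic Prony block $M_r(x) = \sum_{i\in[n]} i^r x_i \bmod p$ for $r=0,1,\ldots,2s$, and a fingerprint block $F_t(x) = \sum_{i\in[n]} \alpha_t^i\, x_i \bmod p$ for a constant number of independent uniformly random $\alpha_t\in\mathbb{F}_p$, consuming $O(\log n)$ random bits in total.

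Recovery proceeds as follows. Form the Hankel matrix $H = (M_{j+k})_{0\le j,k\le s}$ and let $s' := \mathrm{rank}(H)$ over $\mathbb{F}_p$. If $s' = s+1$, output DENSE. Otherwise run Prony's method of order $s'$: find the unique monic polynomial $Q(z)\in\mathbb{F}_p[z]$ of degree $s'$ whose coefficient vector lies in the kernel of a suitable $(s'+1)\times(s'+1)$ Hankel submatrix, factor $Q$ over $\mathbb{F}_p$, and demand that it has $s'$ distinct roots all lying in $[n]$; if this check fails, output DENSE. Solve the resulting $s'\times s'$ Vandermonde system for the values, lift each residue from $\mathbb{F}_p$ back to $\mathbb{Z}$ using $|x_i| < p/2$, and obtain a candidate $s$-sparse $x'$. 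Finally, verify $F_t(x) = F_t(x')$ for every $t$; if any fingerprint disagrees output DENSE, else output $x'$.

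For $s$-sparse $x$ with support $S$, the factorization $H = U_S D\, U_S^{\!\top}$, where $U_S$ has columns $(1,i,i^2,\ldots,i^s)^{\!\top}$ for $i\in S$ and $D = \mathrm{diag}(x_i : i\in S)$, gives $\mathrm{rank}(H) = |S|$: distinct positions make $\ker(U_S) = \{0\}$ and $D$ is invertible, so $\ker(H) = \ker(U_S^{\!\top})$ has dimension $(s+1)-|S|$. The unique monic degree-$|S|$ annihilator of the moment sequence is then $\prod_{i\in S}(z-i)$, so Prony returns the correct positions and Vandermonde inversion returns the correct values, yielding $x' = x$ with probability $1$ and a trivially passing fingerprint check. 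For non-$s$-sparse $x$, either the Prony stage aborts (outputting DENSE) or produces some $x'\ne x$; then $y := x - x'$ is nonzero on $[n]$, so $P_y(z) := \sum_{i\in[n]} y_i z^i \in \mathbb{F}_p[z]$ is a nonzero polynomial of degree at most $n$, with at most $n$ roots in $\mathbb{F}_p$. A uniform $\alpha_t$ therefore satisfies $P_y(\alpha_t) \ne 0$ with probability at least $1 - n/p$, which is high for our choice of $p$; a few independent $\alpha_t$ amplify this below any prescribed inverse polynomial.

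The main obstacle is precisely that over $\mathbb{F}_p$ the naive rank identity $\mathrm{rank}(H) = \min(|\mathrm{supp}(x)|, s+1)$ can fail when $|\mathrm{supp}(x)| > s+1$ because of cancellations modulo $p$, so Prony alone does \emph{not} reliably detect DENSE inputs --- small examples such as two coordinates with opposite values already break it at $s = 0$. The random fingerprint precisely closes this gap, reducing verification to a Schwartz--Zippel polynomial identity test for a univariate polynomial of degree at most $n$. With these ingredients in place, the remaining steps --- Prony's linear algebra, polynomial factoring over $\mathbb{F}_p$, and Vandermonde inversion --- are routine.
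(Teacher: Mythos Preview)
The paper does not prove this lemma; it is stated as a ``well-known result on exact recovery of sparse vectors'' and used as a black box, so there is no paper proof to compare against.

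Your construction is correct and is one of the standard ways to establish this fact. Two small points worth making explicit. First, your map is linear over $\mathbb{F}_p$ rather than $\mathbb{R}$, but under the paper's standing assumption that all updates are integers with $|x_i|\le M=\poly(n)$ this is the intended reading of ``linear sketch''; the matrix entries $i^r\bmod p$ and $\alpha_t^i\bmod p$ are then polynomially bounded as the paper's notation section requires, and your $O(\log n)$ random bits are well within the allotted $O(k\log n)$. Second, the crux of the non-sparse case is that if Prony returns a spurious $s$-sparse candidate $x'$, then $y=x-x'$ is nonzero \emph{modulo $p$}, not merely over $\mathbb{Z}$; this holds because $|x_i|<p/2$ by assumption and $|x'_i|<p/2$ by your lifting convention, so $x_i\equiv x'_i\pmod p$ forces $x_i=x'_i$. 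You use this implicitly when asserting $P_y\not\equiv 0$ in $\mathbb{F}_p[z]$, and it deserves one sentence. With that, the Schwartz--Zippel step goes through, and the Prony/Hankel linear algebra is indeed routine.
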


\begin{theorem}\label{l0}
There exists a zero relative error $L_0$ sampler which
 uses $O(\log^2 n\log(1/\delta))$ bits and outputs a 
 coordinate $i\in[n]$ with probability at least $1-\delta$.
\end{theorem}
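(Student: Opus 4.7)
The plan is to build a basic $L_0$ sampler with constant success probability and $O(\log^2 n)$ bits of space, and then amplify to failure probability $\delta$ by running $O(\log(1/\delta))$ independent copies in parallel and reporting the first non-failing coordinate. The basic sampler uses geometric subsampling: for each level $j=0,1,\ldots,\lceil\log n\rceil$, form a random subset $S_j\subseteq[n]$ by including each coordinate independently with probability $2^{-j}$, and maintain a linear sketch $L_j$ supporting $s$-sparse recovery (Lemma~\ref{lem:sparse}) of the restriction $x\cdot\mathbf{1}_{S_j}$, for $s$ a small constant. At recovery, iterate $j=0,1,\ldots$, apply the recovery procedure to each $L_j$, and output the first coordinate arising from a level whose recovered vector has exactly one nonzero entry; declare FAIL if no level succeeds.

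Correctness has two parts. Assuming full independence of the subset bits, for each $i\in\mathrm{supp}(x)$ the probability $\Pr[S_j\cap\mathrm{supp}(x)=\{i\}]=2^{-j}(1-2^{-j})^{k-1}$ (where $k=\|x\|_0$) does not depend on $i$, so by symmetry together with the independence of subsets across levels, the output conditioned on non-failure is exactly $1/k$-uniform on $\mathrm{supp}(x)$ up to the additive $n^{-c}$ error absorbed by the definition. For the success probability, at the critical level $j^\ast=\lfloor\log k\rfloor$ we have $\E|S_{j^\ast}\cap\mathrm{supp}(x)|\in[1,2]$, and a direct computation gives $\Pr[|S_{j^\ast}\cap\mathrm{supp}(x)|=1]=\Omega(1)$; combined with the high-probability sparse recovery guarantee of Lemma~\ref{lem:sparse}, this yields $\Omega(1)$ overall success probability for the basic sampler.

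For space, the $O(\log n)$ sketches use $O(\log n)$ bits each, contributing $O(\log^2 n)$ bits of state. The remaining difficulty is the $\Theta(n\log n)$ random bits defining the subsets $S_j$. To bring this within budget I would invoke Nisan's pseudorandom generator: the combined procedure of subset-membership generation and sketch update is a small-space one-pass computation reading the random bits and the input stream in a fixed order, so Nisan's PRG supplies the needed random bits from a polylogarithmic-length seed while introducing only $n^{-c}$ additive error in the output distribution. Amplification by $O(\log(1/\delta))$ parallel copies then yields the stated $O(\log^2 n\log(1/\delta))$ bound. The main obstacle is this randomness reduction: one must organize the algorithm so that all state (sketches, PRG seed, level indicators, arithmetic auxiliaries) fits within the $O(\log^2 n)$ budget, and verify that the substitution of true randomness by PRG output preserves the output distribution to within $n^{-c}$, after which the exchangeability argument above carries through.
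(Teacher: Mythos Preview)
Your proposal is correct and follows essentially the same approach as the paper: geometric subsampling over $O(\log n)$ levels, sparse recovery on each level, symmetry to get exact uniformity over the support under full randomness, and Nisan's PRG to compress the subset randomness to an $O(\log^2 n)$ seed. The only noteworthy differences are cosmetic: the paper takes uniformly random subsets $I_k$ of fixed size $2^k$ rather than Bernoulli-$2^{-j}$ sets, and it achieves failure probability $\delta$ in a single instance by setting the sparsity parameter to $s=\Theta(\log(1/\delta))$ (so that some level has $1\le|I_k\cap J|\le s$ with probability $1-\delta$ by Chernoff), whereas you keep $s$ constant, accept only singleton levels, and amplify by $O(\log(1/\delta))$ independent copies---both routes give the same $O(\log^2 n\log(1/\delta))$ bound.
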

\begin{proof} We first present our algorithm assuming a random oracle, and
  then we remove this assumption through the use of the pseudo-random
  generator of Nisan \cite{Nisan}. Let $I_k$ for $k=1,\ldots,\lfloor\log
  n\rfloor$ be subsets of $[n]$ of size $2^k$ chosen uniformly at random and
  $I_0=[n]$. For each
  $k$ we run the sparse recovery procedure of
  Lemma \ref{lem:sparse} on the vector $x$ restricted to the coordinates in
  $I_k$ with $s$ set to $\lceil4\log(1/\delta)\rceil$. We return a uniform
  random non-zero coordinate from the first recovery that gives a non-zero
  $s$-sparse vector. The algorithm fails if each recovery algorithm returns
  zero or DENSE.

  Let $J$ be the set of coordinates $i$ with $x_i\ne0$ (the support of $x$).
  Disregarding the low probability error of the procedure in
  Lemma~\ref{lem:sparse} this procedure returns each index $i\in J$ with equal
  probability and never returns an index outside $J$. To bound the failure
  probability we observe that for $|J|\le s$ failure is not possible, while for
  $|J|>s$ one has $k\in[\lfloor\log n\rfloor]$ such that $\E[|I_k\cap
  J|]=2^k|J|/n$ is between $s/3$ and $2s/3$. For this $k$ alone $1\le|I_k\cap
  J|\le s$ is satisfied with probability at least $1-\delta$ by the Chernoff
  bound limiting failure probability by $\delta$.

  To get rid of the random oracle we use Nisan's generator \cite{Nisan} that
  produces the random bits for the algorithm (including the ones describing
  $I_k$ and the ones for the eventual random choice from $I_k\cap J$) from an
  $O(\log^2 n)$ length seed. It fools every logspace tester including the one
  that tests for a fixed set $J\subseteq[n]$ and $i\in[n]$ if the algorithm
  (assuming correct reconstruction) would return $i$. Thus this version of the
  algorithm, now using $O(\log^2n)$ random bits and $O(\log^2\log(1/\delta))$
  total space, is also a zero relative error $L_0$-sampler with failure
  probability bounded by $\delta+O(n^{-c})$.
\end{proof}

As we will see in Theorem~\ref{lpl}, this space bound is also tight for
$\delta$ a constant and better sampling is not possible even if we allow
constant relative error or a small overall distance of the output from the
$L_0$ distribution.

%
%
\section{Finding Duplicates}\label{sec:duplicates}
Recall that, given a data stream of length $n+1$ over the alphabet $[n]$, finding
duplicates problem asks to output some $a\in[n]$ that has appeared
at least twice in the stream


\begin{theorem}\label{thm:dupub}
For any $\delta>0$ there is a $O(\log^2 n\log(1/\delta))$ space one-pass
algorithm which, given a stream of length $n+1$ over the alphabet $[n]$,
outputs an $i\in[n]$ or FAIL, such that the probability of outputting FAIL
is at most $\delta$ and the algorithm outputs a letter $i\in[n]$ that is no
duplicate with low probability.
\end{theorem}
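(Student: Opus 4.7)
The plan is to reduce the problem to $L_1$ sampling on a carefully constructed vector and apply Theorem~\ref{thm:sampler}. Let $f\in\mathbb Z^n$ be the frequency vector of the stream, and set $x_i = f_i - 1$. Since the stream has length $n+1$ over the alphabet $[n]$, we have $\sum_i x_i = 1$. Moreover, because $f$ is integer-valued, the duplicates are precisely the indices $i$ with $x_i\ge 1$, i.e.\ $x_i>0$. Writing $P=\sum_{i:x_i>0}x_i$ and $N=\sum_{i:x_i<0}|x_i|$, we get $P-N=1$, hence $P=N+1$ and the $L_1$ mass on the ``good'' (duplicate) coordinates is
$$\frac{P}{\|x\|_1}=\frac{N+1}{2N+1}\ge\frac{1}{2}.$$

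To maintain $x$ under a linear sketch I would process each stream element $a$ as an update $(a,+1)$ and, at recovery time, subtract the (separately computed) sketch of the all-ones vector $\mathbf 1$; by linearity this yields a sketch of $x=f-\mathbf 1$. I would then invoke the $L_1$-sampler of Theorem~\ref{thm:sampler} on $x$ with a small constant relative-error parameter (say $\epsilon=1/4$) and constant per-run failure probability, running $v=O(\log(1/\delta))$ independent copies in parallel. Each copy returns either \textsc{fail} or a pair $(i,\hat x_i)$, where with high probability $|\hat x_i-x_i|\le \epsilon|x_i|$. The algorithm then outputs the first $i$ across the $v$ copies whose estimate satisfies $\hat x_i>0$, and reports \textsc{fail} if no such $i$ is found.

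For correctness, since $x_i$ is integer-valued, $|x_i|\ge 1$ whenever $x_i\neq 0$, and so for $\epsilon=1/4$ the sign of $\hat x_i$ matches the sign of $x_i$ with high probability. Therefore, whenever the algorithm outputs some index, it outputs a duplicate except with low probability, as required by the theorem. For the failure bound, conditioned on a single copy of the sampler not failing, the probability that it returns $i$ with $x_i>0$ is at least $(1-\epsilon)\,P/\|x\|_1-O(n^{-c})\ge \frac{1-\epsilon}{2}-O(n^{-c})$, which is a positive constant. Combined with the constant unconditional success probability of the sampler, each copy produces a duplicate with constant probability, and taking $v=O(\log(1/\delta))$ parallel copies drives the overall failure probability below $\delta$. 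By Theorem~\ref{thm:sampler}, each copy of the $L_1$-sampler uses $O(\log^2 n)$ bits for constant $\epsilon$, so the total space is $O(\log^2 n\,\log(1/\delta))$.

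The only subtle point, which is the main ``obstacle'' in an otherwise direct reduction, is the passage from the sampler's relative-error estimate $\hat x_i$ to a reliable determination of whether $i$ is a duplicate. This step is what forces the choice $\epsilon<1$ and crucially uses integrality of $f$, which guarantees $|x_i|\ge 1$ for nonzero coordinates; everything else follows immediately from the guarantees of the $L_1$-sampler and the elementary $P/\|x\|_1\ge 1/2$ calculation above.
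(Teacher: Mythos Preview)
Your proposal is correct and follows essentially the same approach as the paper: form the shifted vector $x=f-\mathbf 1$, observe that $\sum_i x_i=1$ forces $P/\|x\|_1\ge 1/2$, run a constant-parameter $L_1$-sampler (Theorem~\ref{thm:sampler}) on $x$, accept the sample if its estimate is positive, and repeat $O(\log(1/\delta))$ times in parallel. The only cosmetic differences are that the paper implements the $-\mathbf 1$ shift by feeding the updates $(i,-1)$ up front rather than subtracting the sketch of $\mathbf 1$ at recovery time, and it uses $\epsilon=1/2$ instead of $1/4$; your explicit use of integrality to argue the sign of $\hat x_i$ matches $x_i$ is exactly the paper's ``relative error at least $1$'' observation.
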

\begin{proof}
Let $x$ be an $n$-dimensional vector, initially zero at each coordinate. We
run the $L_1$-sampler of Theorem~\ref{thm:sampler} on $x$, with both relative error
and failure probability set to $1/2$. Before we start processing the
stream, we subtract 1 from each coordinate of $x$; i.e., we feed the updates
$(i,-1)$ for $i=1,\ldots n$ to the $L_1$ sampling algorithm. When a stream
item $i\in [n]$ comes, we increase $x_i$ by 1; i.e., we generate the update
$(i,1)$.

Observe that when the stream is exhausted, we have $x_i\geq 1$ for items $i$
that have at least two occurrences in the stream, $x_i=0$ for items that
appear once, and $x_i=-1$ for items that do not appear.  Note that our
$L_1$-sampler, if it does not fail, outputs an index $i$ and an approximation
$x^*$ of $x_i$. If $x^*$ is positive, we output $i$, if it is
negative or the $L_1$-sampler fails, we output FAIL. We have
$\sum_{i=1}^nx_i=1$,  hence a perfect $L_1$ sample from $x$ is positive with
more than half probability. Taking into account that our $L_1$-sampler has
$1/2$ relative error and failure probability (and neglecting for a second the
chance that $x^*$ has different sign from $x_i$) we conclude that we output a
duplicate with probability at least $1/4$. The event that $x^*$ does not have
the same sign as $x_i$ (and thus the relative error is at least 1) has low
probability. This low probability can increase the failure probability and/or
introduce error when we output non-duplicate items.

Repeating the algorithm $O(\log(1/\delta))$ times in parallel and accepting
the first non-failing output reduces the failure rate to
$\delta$ but keeps the error rate low.
\end{proof}

As we will see in Theorem~\ref{thm:duplb}, our space bound is tight for
$\delta<1$ a constant. 

It is natural to study the duplicates problem for other ranges of
parameters. Assume that we have a stream of length $n-s\le n$ over the
alphabet $[n]$. For this problem, Gopalan et al.\ \cite{GopalanJaikumar} gave
an $O((s+1)\log^3 n)$ bits algorithm and an $\Omega(s)$ lower bound. Here we
give an algorithm which uses $O(s\log n+\log^2 n)$ space.

\begin{theorem}\label{dups}
For any $\delta>0$ there is an $O(s\log n+\log^2 n\log
  1/\delta)$ space one-pass algorithm which, given a stream of length $n-s$
  over the alphabet $[n]$, outputs NO-DUPLICATE with probability 
  1 if the input sequence has no duplicates, otherwise 
  it outputs $i \in [n]$ or reports FAIL. The returned number is a
  duplicate with high probability while the probability of returning FAIL is at most $\delta$.
\end{theorem}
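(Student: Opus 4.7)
The plan is to combine the encoding used in Theorem~\ref{thm:dupub} with the exact sparse-recovery primitive of Lemma~\ref{lem:sparse}: use sparse recovery when the support of the signed-frequency vector is small and fall back on an $L_1$-sampler when it is large. As in Theorem~\ref{thm:dupub}, I would initialize a vector $x\in\mathbb R^n$ by feeding the updates $(i,-1)$ for each $i\in[n]$ before processing the stream and then feeding $(j,1)$ for each stream symbol $j$. After a stream of length $n-s$ this yields $x_i=-1$ for items absent from the stream, $x_i=0$ for singletons, $x_i\ge 1$ for items appearing at least twice, and $\sum_i x_i=-s$. The signed mass is now biased toward the missing coordinates, so a single $L_1$ sample need not concentrate on a duplicate.

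In parallel I would maintain (a) an exact sparse-recovery linear sketch from Lemma~\ref{lem:sparse} with sparsity parameter $k=2s$, costing $O(s\log n)$ bits, and (b) $v=O(\log(1/\delta))$ independent copies of the one-pass $L_1$-sampler of Theorem~\ref{thm:sampler} with constant relative error (say $\epsilon=1/2$) and constant failure rate, costing $O(\log^2n\log(1/\delta))$ bits in total. At the end of the stream: if sparse recovery returns some vector $x'$ rather than DENSE, output any coordinate with $x'_i>0$, and if no such coordinate exists output NO-DUPLICATE; otherwise examine the samplers in turn and output the index returned by the first one whose approximation $x^*$ is positive, declaring FAIL if no sampler yields a positive estimate.

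For correctness, the no-duplicate case is handled deterministically: $x$ is exactly $s$-sparse, so Lemma~\ref{lem:sparse} with parameter $2s$ recovers $x$ with probability~$1$, no coordinate is positive, and we output NO-DUPLICATE. When duplicates exist, let $e=\sum_{i:x_i>0}x_i$ be the number of extra occurrences, let $|D|$ be the number of duplicated coordinates, and observe that the number of missing items is $m=s+e$ since $\sum_i x_i=-s$. If $\|x\|_0=m+|D|\le 2s$ then sparse recovery returns $x$ exactly and a duplicate can be read off. If $\|x\|_0>2s$, then combining $e+|D|>s$ with the trivial bound $|D|\le e$ forces $e>s/2$; the positive mass is $e$ while $\|x\|_1=m+e=s+2e$, so a single $L_1$-sampler draws a duplicate with probability at least $e/(s+2e)>1/4$ (up to the $(1\pm O(\epsilon))$ distortion). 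Because every nonzero $x_i$ is an integer of absolute value at least~$1$, a constant relative error on $x^*$ preserves its sign with high probability, so $v=O(\log(1/\delta))$ repetitions drive the overall failure probability below~$\delta$.

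The one delicate point is the last calculation in the DENSE regime: the inequality $|D|\le e$ is what converts the sparse-recovery failure threshold $k=2s$ into a lower bound $e=\Omega(s)$ on the duplicate mass, thereby making the duplicate mass a constant fraction of $\|x\|_1=s+2e$ no matter how many items happen to be missing. The claimed $O(s\log n+\log^2n\log(1/\delta))$ space bound then follows by summing the costs of the two maintained structures.
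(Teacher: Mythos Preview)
Your proof is correct and follows essentially the same approach as the paper: run exact sparse recovery and an $L_1$-sampler in parallel on the shifted frequency vector, use the recovered vector when it is sparse, and otherwise argue that the positive mass is a constant fraction of $\|x\|_1$ so that $O(\log(1/\delta))$ samples suffice. The only difference is cosmetic: the paper thresholds on $\|x\|_1\le 5s$ and gets positive-sample probability $>2/5$ directly from $\|x\|_1^+-\|x\|_1^-=-s$, whereas you threshold on $\|x\|_0\le 2s$ and use the extra observation $|D|\le e$ to reach $e/(s+2e)>1/4$; both yield the same space bound.
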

\begin{proof} Let $x$ be an $n$-dimensional vector updated
  according to the description in the proof of Theorem \ref{thm:dupub}; i.e.,
  $x_i$ is one less than the number of times $i$ appears in the stream. In parallel,
   we run the exact recovery procedure from Lemma \ref{lem:sparse} with parameter $5s$
  and the $1/2$ relative error $L_1$-sampler of 
  Theorem~\ref{thm:sampler} with failure rate $1/2$, both on the vector $x$. If the
  recovery algorithm returns a vector (as opposed to DENSE) we proceed 
  and give the correct output
  assuming we have learned the entire $x$. Otherwise we consider the output of the sampling
  algorithm. If it is $(i,x^*)$ with $x^*>0$ we report $i$ as a duplicate
  otherwise (if $x^*\le0$ or the sampling algorithm fails) we output FAIL.
Define
\begin{align*}
\|x\|^+_1=\sum_{i:x_i>0} |x_i| & &\text{ and }& &\|x\|_1^-=\sum_{i:x_i<0} |x_i|.
\end{align*}
Note that $\|x\|^+_1-\|x\|_1^-=\sum_{i=1}^n x_i = -s$.
 If $\|x\|^+_1 + \|x\|^-_1 \leq 5s$, then $x$ is $5s$-sparse, thus the sparse
recovery procedure outputs $x$ and the algorithm makes no error. Note that the
no repetition case falls into this category. If, however,
$\|x\|^+_1 + \|x\|^-_1 > 5s$, then the probability that a perfect $L_1$ sample
from $x$ is positive is $\|x\|^+_1/\|x\|_1 > 2/5$. Taking into account the
relative error and failing probability (but ignoring the low probability event
of the sampler outputting a wrong sign or sparse recovery algorithm reporting
a vector), we conclude that with probability at
least 1/10 we get a positive sample and a correct output, otherwise we output
FAIL. The failure probability can be decreased to $\delta$ by
$O(\log(1/\delta))$ independent repetitions of the sampler. Note that the
sparse recovery does not have to be repeated as it has low error probability.

The sparse recovery procedure takes $O(s\log n)$ bits by
Lemma~\ref{lem:sparse} for $s>0$ (it takes $O(\log n)$ bits for $s=0$) and
each instance of the $L_1$-sampler requires $O(\log^2 n)$ bits by
Theorem~\ref{lps}, totaling $O(s\log n + \log^2n\log1/\delta)$ bits. 
\end{proof}

Here we do not have a matching lower bound, but only the $\Omega(\log^2n+s)$ that
follows from the $\Omega(s)$ bound in \cite{GopalanJaikumar} and our
$\Omega(\log^2 n)$ bound on the original version of the duplicates problem.

We remark the last two theorems can be stated in a bit more general
form. Instead of considering repetitions in data streams one can consider the
problem of finding an index $i$ with $x_i>0$ for a vector
$x\in\mathbb Z^n$ given by an update stream. Let
$s=-\sum_{i=1}^nx_i$. If $s<0$, then a positive coordinate exists and the
algorithm of Theorem~\ref{thm:dupub} finds one using $O(\log^2 n\log(1/\delta))$
space with low error and at most $\delta$ failure probability.
If $s\ge0$ a positive coordinate does not necessarily exist, but the algorithm
of Theorem~\ref{dups} finds one, report none exists or fails with the error,
failure and space bounds claimed there.

Let us consider finally the version of the duplicates problem, where we have a
stream of length $n+s>n$ over the alphabet $[n]$. Our lower and upper bounds
are even farther in this case. A duplicate can be found using $O(\min\{\log^2n,
(n/s)\log n\})$ bits of memory in one pass with constant probability as
follows. If we sample a random item from the stream, it will appear again
unless that was the last appearance of the letter. As there are at most $n$
last appearances in the stream of length $n+s$, the probability for a uniform
random sample to repeat later is at least $s/(n+s)$. Therefore, if $n/s < \log
n$, we can sample $4\lceil n/s\rceil$ items from the stream uniformly at
random and check if one of them appears again to obtain a constant error
algorithm for finding duplicates. If on the other hand $n/s \geq \log n$, we use
the algorithm in Theorem~\ref{thm:dupub}.

Combining our lower bound for the original version of the duplicates problem 
with the simple lower bound of  $\Omega(\log n)$, we conclude that any 
streaming algorithm that finds a duplicate in length $n+s$ streams must use 
$\Omega(\log^2(n/s)+ \log n)$ bits.

%
%
\section{Lower Bounds}\label{sec:lb}
All our lower bounds follow from the augmented indexing problem. This problem
is defined as follows. Let $k$ and $m$ be positive integers.
The first player Alice, is given a string $x\in[k]^m$, while the second player Bob
is given an integer $i\in [m]$ and $x_j$ for $j<i$. Alice sends a single
message to Bob and Bob should output $x_i$. 

\begin{lemma}[\cite{MiltersenNSW}]\label{lem:ai}
In any one-way protocol in the joint random source model with success
probability at least $1-\delta>\frac{3}{2k}$, Alice must send a message of
size $\Omega((1-\delta)m\log k)$.
\end{lemma}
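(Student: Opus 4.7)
My plan is to prove the lemma by an information-theoretic argument combining Yao's minimax principle with Fano's inequality applied coordinate by coordinate. The augmented-indexing structure — Bob knows the entire prefix $x_{<i}$ — is exactly what lets the chain rule decompose the total information Alice must send into $m$ pieces, one per coordinate.

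I would proceed as follows. First, by Yao's minimax principle, it suffices to lower-bound the message length of any deterministic protocol on a hard distribution, and I take $x$ uniform on $[k]^m$ with $i$ uniform on $[m]$ (independent of $x$), so that the success guarantee becomes $\Pr_{x,i}[\hat{x}_i = x_i] \geq 1-\delta$. Writing $M = M(x)$ for Alice's message and letting $\delta_i$ be the error probability on coordinate $i$ (averaged over $x$), we have $\frac{1}{m}\sum_i \delta_i \leq \delta$. Fano's inequality applied to Bob's estimator $\hat{x}_i = B(M, x_{<i})$, which depends only on $M$ and $x_{<i}$, gives
\[
H(x_i \mid M, x_{<i}) \;\leq\; H_2(\delta_i) + \delta_i\log(k-1).
\]
Since $x_i$ is uniform and independent of $x_{<i}$, $H(x_i\mid x_{<i})=\log k$, so
\[
I(M;x_i\mid x_{<i}) \;\geq\; \log k - H_2(\delta_i) - \delta_i\log(k-1) \;\geq\; (1-\delta_i)\log k - H_2(\delta_i).
\]
The chain rule and the data-processing bound $|M|\geq H(M)\geq I(M;x)=\sum_i I(M;x_i\mid x_{<i})$ then yield, after using concavity of $H_2$,
\[
|M| \;\geq\; m\bigl[(1-\delta)\log k - H_2(\delta)\bigr].
\]

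The step where I expect the real work is converting this bound into the clean $\Omega((1-\delta)m\log k)$ stated in the lemma; this is exactly where the hypothesis $1-\delta > 3/(2k)$ enters. When $\delta$ is bounded away from $1$ the term $H_2(\delta) = O(1)$ is dwarfed by $(1-\delta)\log k$ and the calculation is immediate. Near the threshold, however, $H_2(\delta)\approx (1-\delta)\log(e/(1-\delta))$ can be comparable to $(1-\delta)\log k$, so a more careful bookkeeping is needed. The way I would handle this is to work only with those coordinates $i$ for which $1-\delta_i$ exceeds the random-guessing advantage $1/k$ by a definite multiplicative constant (a Markov-style argument on $\{\delta_i\}$ shows such coordinates carry an $\Omega((1-\delta)m)$ share of the ``success mass''), and on those coordinates use the sharper estimate that for an estimator beating uniform by a constant factor the conditional entropy drop is $\Omega(\log k)$ rather than only $\log k - H_2(\delta_i)$ — this is a short direct computation from the joint distribution of $(x_i,\hat x_i)$ and recovers the factor $\log k$ exactly where Fano loses it. Assembling these pieces gives the $\Omega((1-\delta)m\log k)$ bound claimed, matching \cite{MiltersenNSW}.
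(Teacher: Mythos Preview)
The paper does not prove this lemma; it is quoted from \cite{MiltersenNSW} and used as a black box, so there is no in-paper argument to compare against. Your information-theoretic proof via Yao, Fano, and the chain rule is the standard route and is essentially what one finds in \cite{MiltersenNSW} and later expositions.

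Two small remarks on your write-up. First, the step ``by concavity of $H_2$, $\sum_i H_2(\delta_i)\le m\,H_2(\delta)$'' is not quite right when $\delta>1/2$: concavity gives $\frac1m\sum_i H_2(\delta_i)\le H_2(\bar\delta)$ with $\bar\delta=\frac1m\sum_i\delta_i\le\delta$, but $H_2$ is not monotone, so $H_2(\bar\delta)\le H_2(\delta)$ need not hold. This is harmless for the final $\Omega(\cdot)$ statement (one can just bound $H_2(\bar\delta)\le1$, or keep $\bar\delta$ throughout), but the displayed inequality as written is false in general. Second, your diagnosis of the threshold regime is exactly right: when $1-\delta$ is only a constant multiple of $1/k$, the crude Fano bound $(1-\delta)\log k - H_2(\delta)$ degrades to $\Theta(1-\delta)$ rather than $\Theta((1-\delta)\log k)$, because $H_2(\delta)\approx(1-\delta)\log\frac{1}{1-\delta}\approx(1-\delta)\log k$. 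Your proposed fix --- separating out the coordinates with $1-\delta_i$ bounded above the trivial $1/k$ by a constant factor and arguing directly that on those coordinates $I(M;x_i\mid x_{<i})=\Omega(\log k)$ --- is the right idea, though the sketch as given would need to be fleshed out (in particular, the Markov step must be applied to the quantities $1-\delta_i$, whose average is at least $1-\delta>3/(2k)$, to extract $\Omega((1-\delta)m)$ coordinates with $1-\delta_i$ above, say, $5/(4k)$; on each such coordinate a direct calculation shows the mutual information is $\Omega(\log k)$). With those details filled in, your argument is complete and matches the cited result.
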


We will use this lemma by reducing augmented indexing to other communication
or streaming problems.

%
%
\subsection{Universal Relation}\label{sec:ur}
Consider the following two player communication game.
Alice gets a string $x\in\{0,1\}^n$, Bob gets $y\in\{0,1\}^n$ with
  the promise that $x\neq y$. The players exchange 
  messages and the last player to receive a message must
   output an index $i\in [n]$ such that $x_i\neq y_i$. 
  We call this the {\em universal relation communication problem} and denote it by $\URn$.

This relation has been studied in detail for deterministic communication, as 
it naturally arises in the context of
 Karchmer-Wigderson games \cite{KarchmerWigderson}. We note however
 that our definition is slightly unusual: in most settings both players must
 obtain the same index $i$ such that $x_i\neq y_i$, whereas we are satisfied
 with the last player to receive a message learning such an $i$. Clearly, the
 stronger requirement can be met in $\lceil\log n\rceil$ additional bits and
 one additional round. The additional bits are needed in deterministic case
 but we are not concerned with $O(\log n)$ terms for our bounds, so the two
 models are almost equivalent up to the shift of one in the number of rounds.

The best deterministic protocol for $\URn$ is due to 
 Tardos and Zwick~\cite{TardosZwick}. Improving a previous result 
 by Karchmer \cite{Karchmer}, they gave a 3 round deterministic protocol 
 using $n+2$ bits of communication with both players learning the same index
 $i$ and showed that $n+1$ bits is necessary for such a protocol. They also
 gave an $n-\lfloor\log n\rfloor+2$ bit 2 round deterministic protocol for our
 weaker version of the problem, which is also tight except for the $+2$
 term. They also gave an $n-\lfloor\log n\rfloor+4$ bit 4 round protocol, where
 both players find an index where $x$ and $y$ differ---but not necessarily
 the same index. This shows that finding the same difference is harder.

Let $R^k_\delta(U)$ denote the $k$-round $\delta$-error communication
complexity of the communication problem $U$. We write $R_\delta(U)$ to denote
the $\delta$-error communication complexity when the number of rounds is not
bounded. The following proposition follows from similar ideas that were used in 
Theorem~\ref{l0}. See the Appendix for a sketch of the proof.

\begin{proposition}\label{thm:urub}
It holds that $R^1_\delta(\URn)=O(\log^2 n\log\frac{1}{\delta})$ and $R^2_\delta(\URn)=O(\log n\log\frac{1}{\delta})$.
\end{proposition}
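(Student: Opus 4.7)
My plan is to mimic the $L_0$-sampler construction of Theorem~\ref{l0} in the communication setting. Alice and Bob use their shared random source to generate subsets $I_0=[n], I_1,\dots,I_{\lfloor\log n\rfloor}\subseteq[n]$ with $|I_k|=2^k$ for $k\ge 1$, together with the randomness of a sparsity-$s$ recovery scheme from Lemma~\ref{lem:sparse} at each level, for $s=\Theta(\log(1/\delta))$.

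For the one-round bound, Alice simply transmits, for every level $k$, the linear sparse-recovery sketch of $x$ restricted to $I_k$; each level costs $O(s\log n)=O(\log n\log(1/\delta))$ bits and there are $O(\log n)$ levels, so Alice sends $O(\log^2 n\log(1/\delta))$ bits in total. Using the shared randomness, Bob computes the corresponding linear sketches for $y$ locally and, by linearity, obtains a sketch of $(x-y)|_{I_k}$ at every $k$. Letting $J$ be the support of $x-y$ (nonempty since $x\neq y$), the concentration argument from Theorem~\ref{l0} shows that with probability at least $1-\delta$ some level $k$ satisfies $1\le |J\cap I_k|\le s$, so Lemma~\ref{lem:sparse} recovers $(x-y)|_{I_k}$ at that level and Bob outputs any nonzero coordinate of it.

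For the two-round bound, the idea is to spend a cheap first message to identify a single good level and then exchange sparse-recovery data for only that one level. In round one Bob sends Alice a linear $L_0$-estimation sketch of $y$, using $O(\log n\log(1/\delta))$ bits; by linearity Alice combines it with a locally computed sketch of $x$ to obtain a constant-factor estimate $\tilde{j}$ of $|J|$ with failure probability $\delta$. She then chooses $k^*$ so that $2^{k^*}\tilde{j}/n=\Theta(s)$, which ensures $1\le |J\cap I_{k^*}|\le s$ with probability $1-\delta$ by a Chernoff bound. In round two Alice sends Bob the index $k^*$ together with the linear sparse-recovery sketch of $x|_{I_{k^*}}$, using $O(s\log n)+O(\log\log n)=O(\log n\log(1/\delta))$ bits; Bob subtracts his own sketch of $y|_{I_{k^*}}$, invokes Lemma~\ref{lem:sparse}, and outputs any nonzero coordinate.

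The main technical point will be verifying that a constant-factor estimate of $|J|$ suffices to lock onto a single level at which recovery succeeds with failure probability $O(\delta)$, which should follow from Chernoff-type concentration for the sum of indicators $\mathbf{1}[i\in I_{k^*}]$ over $i\in J$ combined with the guarantee of Lemma~\ref{lem:sparse}; a small corner case is when $|J|\le s$, handled by taking $k^*=0$ so that $I_{k^*}=[n]$ and running sparse recovery on the full vector. Unlike in Theorem~\ref{l0}, we do not need Nisan's generator here, since shared public randomness is free in the communication complexity model.
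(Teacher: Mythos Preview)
Your proposal is correct and follows essentially the same approach as the paper's (sketch of a) proof: the one-round bound is obtained by running the $L_0$-sampler of Theorem~\ref{l0} on $x-y$, and the two-round bound by first identifying a single good level and then doing sparse recovery only there. Your use of an $L_0$-estimation sketch to pick the level is a concrete instantiation of the paper's vaguer ``finding such a set in the first round''; your observation that Nisan's generator is unnecessary in the public-coin model is exactly what allows the two-round bound to avoid the extra $\log\log n$ factor mentioned in the paper's remark about two-\emph{pass} streaming $L_0$-sampling.
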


We remark that along similar lines one can find an $O(\log n \log\log n\log1/\delta )$
space two-pass zero relative error $L_0$-sampling algorithm, by estimating  $L_0$ 
of the vector defined by the stream in the first pass using \cite{KaneNW10}. Next
 we will show that the above proposition is best possible up to the $O(\log\delta^{-1})$
terms. We start with an averaging lemma. The proof can be found in the Appendix.

\begin{lemma}\label{aver} Any protocol for $\URn$ can be turned into one that
outputs every index $i\in[n]$ with $x_i\ne y_i$ with the same probability. The
new protocol uses a joint random source. The number of bits sent, the number
of rounds and the error probability does not change.
\end{lemma}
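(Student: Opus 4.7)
Plan for the averaging lemma.

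The plan is to symmetrize the given protocol $P$ by hiding the ``identity'' of each coordinate via two independent shared random operations: a uniform random permutation $\pi \in S_n$ and a uniform random XOR mask $r\in\{0,1\}^n$. Concretely, before the protocol starts the players use the joint random source to sample $(\pi,r)$; Alice replaces her input $x$ by $x'$ with $x'_k = x_{\pi^{-1}(k)}\oplus r_k$, Bob replaces $y$ by the analogous $y'$, and they simulate $P$ on $(x',y')$. If $P$ returns index $j$ then the last player outputs $\pi^{-1}(j)$. Because both players apply the same $\pi$ and the same $r$, we have $x'\oplus y' = (x\oplus y)^{(\pi)}$, so the difference set is carried bijectively: $j\in D(x',y')$ iff $\pi^{-1}(j)\in D(x,y)$. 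Hence the new protocol succeeds on $(x,y)$ exactly when $P$ succeeds on $(x',y')$; as $P$ has error at most $\delta$ on \emph{every} input, the new worst-case error is still at most $\delta$. The number of rounds and the bits sent are unchanged since the pre/post-processing is purely local.

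Next I would show the uniformity on $D(x,y)$. Write $q_j(u,v)$ for the probability that $P$ outputs $j$ on input $(u,v)$, and for a set $D'\subseteq[n]$ put
\[
Q_j(D') \;=\; \frac{1}{2^n}\sum_{r\in\{0,1\}^n} q_j(u_0\oplus r,\, v_0\oplus r),
\]
where $(u_0,v_0)$ is any representative with $u_0\oplus v_0 = \mathbf 1_{D'}$. The right-hand side is an average over \emph{all} $(u,v)$ with difference set $D'$, so $Q_j(D')$ depends only on $D'$ and $j$. Then for any fixed $(x,y)$ with $D=D(x,y)$ and any $i\in D$,
\[
\Pr[\text{new output}=i] \;=\; \frac{1}{n!}\sum_{\pi\in S_n} Q_{\pi(i)}\bigl(\pi(D)\bigr).
\]

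To finish, fix $i,i'\in D$ and let $\sigma = (i\,i')$ be the transposition. Since $\sigma$ permutes $D$ within itself, $\sigma(D)=D$; substituting $\pi = \pi'\sigma$ in the sum gives $\pi(i) = \pi'(i')$ and $\pi(D) = \pi'(D)$, so
\[
\sum_{\pi} Q_{\pi(i)}(\pi(D)) \;=\; \sum_{\pi'} Q_{\pi'(i')}(\pi'(D)),
\]
i.e., $\Pr[\text{output}=i]=\Pr[\text{output}=i']$, which is the uniformity required.

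The one step that requires a little care is the need for the XOR mask: the permutation alone only yields the identity $\Pr_{(x,y)}[\text{out}=i] = \Pr_{(x^{(i\,i')}, y^{(i\,i')})}[\text{out}=i']$, and the swap $x\mapsto x^{(i\,i')}$ can change $(x,y)$ whenever $i,i'\in D$ appear in the ``crossed'' configuration $(x_i,y_i,x_{i'},y_{i'})=(0,1,1,0)$. Averaging over $r$ collapses all inputs sharing a common difference vector, so $Q_j$ depends only on $D'$ and restores full $S_n$-symmetry on $D$. I expect this to be the only conceptual obstacle; the remainder is just careful bookkeeping of the substitution $\pi = \pi'\sigma$.
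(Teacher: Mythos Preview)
Your proposal is correct and is exactly the paper's approach: randomly permute the coordinates with a shared $\pi$, XOR both strings with a shared mask $r$ (the paper phrases this as ``flip the digits with coordinates in a random subset $S$''), run the original protocol, and output $\pi^{-1}(j)$. The paper dispatches the uniformity claim with a one-word ``by symmetry,'' whereas you spell out the transposition substitution $\pi=\pi'\sigma$ and correctly identify why the XOR mask is needed to make $Q_j$ depend only on the difference set.
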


\begin{theorem}\label{thm:urlb}
For any $\delta<1$ constant we have $R^1_\delta(\URn)=\Omega(\log^2 n)$ and
$R_\delta=\Omega(\log n)$.
\end{theorem}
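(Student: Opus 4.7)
The plan is to reduce augmented indexing to $\URn$ for the one-round $\Omega(\log^2 n)$ bound and to use a direct information-theoretic argument for the multi-round $\Omega(\log n)$ bound. In both cases I would first invoke Lemma~\ref{aver} to assume that, conditioned on no error, the UR protocol outputs a uniformly random element of $\{i:x_i\ne y_i\}$; this preserves the bit count and the round count, so any lower bound for such symmetrized protocols transfers back.

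For the one-round bound, choose $m=\lfloor\log k\rfloor=\lfloor(\log n)/2\rfloor$, so that $m\log k=\Theta(\log^2 n)$ and $k(2^m-1)\le n$. Identify the $n$ coordinates with $m$ \emph{levels}, where level $j\in[m]$ consists of $2^{m-j}$ disjoint blocks of $k$ coordinates each. Given an augmented indexing instance in which Alice holds $x\in[k]^m$ and Bob holds $i\in[m]$ together with $x_1,\dots,x_{i-1}$, let Alice set her $\URn$ vector to have a $1$ at coordinate $x_j$ inside every block on every level $j$, while Bob replicates the same pattern on levels $j<i$ (which he can compute) and sets all coordinates to $0$ on levels $j\ge i$.

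The disagreements are then precisely the $\sum_{j\ge i}2^{m-j}=2^{m-i+1}-1$ positions of the form ``level $j\ge i$, block $c$, coordinate $x_j$,'' and exactly $2^{m-i}$ of them lie in level $i$. By the uniform-output assumption from Lemma~\ref{aver}, the UR output lies in level $i$ with probability strictly greater than $1/2$; when it does, Bob reads off $x_i$ from the intra-block coordinate of the UR output, and otherwise he outputs an arbitrary symbol of $[k]$. The resulting augmented indexing protocol succeeds with probability at least $(1-\delta)/2$, which exceeds $3/(2k)=3/(2\sqrt n)$ for any constant $\delta<1$ and $n$ large. Lemma~\ref{lem:ai} therefore forces Alice's single message to contain $\Omega(m\log k)=\Omega(\log^2 n)$ bits, matching the claimed bound.

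For the multi-round bound, I would use the input distribution in which $x$ is uniform over weight-one vectors in $\{0,1\}^n$ and $y=0$: the unique disagreement equals the position of the $1$ in $x$, which has entropy $\log n$. Since the transcript together with the shared randomness determines the output, Fano's inequality gives $|{\rm transcript}|\ge(1-\delta)\log n-H(\delta)=\Omega(\log n)$. The hard part is the level weighting in the first reduction: the geometrically decaying block counts $2^{m-j}$ have to simultaneously keep $\sum_j k\cdot2^{m-j}\le n$ and leave a constant fraction of disagreements inside level $i$ regardless of which $i$ Bob holds. Balancing $m+\log k\le\log n$ against the product $m\log k$ by setting $m\approx\log k$ is what drives the bound up to $\Theta(\log^2 n)$; any faster decay would encode too few bits of augmented indexing, and any slower decay would push the total coordinate count past $n$ or drown level $i$ in the tail.
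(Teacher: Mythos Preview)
Your proposal is correct and follows essentially the same approach as the paper. The paper also reduces one-round $\URn$ from augmented indexing via geometrically weighted levels (their level $j$ has $2^{s-j}$ copies of a unit vector in $\mathbb R^{2^t}$, matching your $2^{m-j}$ blocks of $k$ coordinates with $s\leftrightarrow m$, $2^t\leftrightarrow k$), invokes Lemma~\ref{aver} to get a uniform difference, and observes that level $i$ carries more than half the differences; for the unbounded-round bound the paper uses the same Hamming-distance-one distribution and the information-theoretic argument you sketch.
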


\begin{proof}
The second bound comes from considering a uniform random pair $(x,y)$ with
Hamming distance 1. Either player needs to get $\log n$ bits of information to
learn the only index where the strings differ.

To prove the first bound suppose Alice and Bob wants to
solve the augmented indexing problem with Alice receiving $z\in[2^t]^s$ and Bob
getting $i\in [s]$ and $z_j$ for $j<i$.

Let them construct real vectors $u$ and
$v$ as follows. Let $e_q\in\mathbb R^{2^t}$
be the standard unit vector in the direction of coordinate $q$. Alice forms
the vectors $w_j$ by concatenating $2^{s-j}$ copies of $e_{z_j}$, then she
forms $u$ by concatenating these vectors $w_j$ for
$j\in[s]$. The dimension of $u$ is $n=(2^s-1)2^t$. Bob
obtains $v$ by concatenating the same vectors $w_j$ for $j\in[i-1]$ (these are
known to him) and then
concatenating enough zeros to reach the same dimension
$n$.

Now Alice and Bob perform the $R^1_\delta(\URn)$ length
$\delta$ error one round protocol for $\URn$. By Lemma~\ref{aver} we
may assume the protocol returns a uniform random index where $u$ and $v$
differ. Note that each such index reveals one coordinate $z_j\in[2^t]$ to Bob
for $j\ge i$. As $z_j$ is revealed by $2^{s-j}$ such indices more than half the
time when the $\URn$ protocol does not err Bob learns the correct value of
$z_i$. This yields a $R^1_\delta(\URn)$ length one way protocol for augmented
indexing with error probability $(1+\delta)/2$. By Lemma~\ref{lem:ai} we have
$R^1_\delta(\URn)=\Omega(st)$. Choosing $s=t$ proves the theorem.
\end{proof}

\subsection{Finding Duplicates}\label{sec:dublb}
\begin{theorem}\label{thm:duplb}
Any one-pass streaming algorithm that outputs a duplicate with constant
probability  uses $\Omega(\log^2 n)$ space. This remains true even if the
stream is not allowed to have an element repeated more than twice. 
\end{theorem}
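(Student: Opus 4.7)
The plan is to reduce the universal relation problem $\UR^m$ to finding duplicates in a stream and invoke the one-round lower bound $R^1_\delta(\UR^m)=\Omega(\log^2 m)$ of Theorem~\ref{thm:urlb}. Throughout I take the alphabet size of the stream to be $n=2m$, so that $\log n$ and $\log m$ agree up to a constant.

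First I would construct the reduction. Identify the alphabet $[2m]$ with $[m]\times\{0,1\}$. Given Alice's input $x\in\{0,1\}^m$ and Bob's input $y\in\{0,1\}^m$ with the promise $x\neq y$, Alice writes the prefix of a stream by inserting the symbol $(i,1-x_i)$ for each $i=1,\ldots,m$, and Bob appends the suffix $(i,y_i)$ for $i=1,\ldots,m$. The key observation is that the symbol $(j,b)$ occurs twice iff $1-x_j=y_j=b$, i.e.\ iff $x_j\neq y_j$, and otherwise occurs at most once. Hence no element appears more than twice, and since $x\neq y$ at least one duplicate exists, so the constructed stream is a legitimate input to the ``at most twice'' variant of the duplicates problem.

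Second I would turn the streaming algorithm into a one-round protocol: Alice simulates the algorithm on her half of the stream, sends its $s(n)$-bit memory state to Bob in a single message, Bob continues the simulation on his half, and upon receiving a duplicate $(j,b)$ outputs $j$ as a differing coordinate for $\UR^m$. This succeeds with at least the success probability of the streaming algorithm and uses $s(n)$ bits of communication; a random oracle available to the algorithm is visible to both players as shared randomness, matching the model of Theorem~\ref{thm:urlb}. That theorem then gives $s(n)\geq R^1_\delta(\UR^m)=\Omega(\log^2 m)=\Omega(\log^2 n)$ for any constant $\delta<1$. The only substantive design point is flipping Alice's bits---writing $(i,1-x_i)$ rather than $(i,x_i)$---so that duplicates witness disagreements rather than agreements, which also automatically enforces the ``at most twice'' condition; beyond that, the reduction is a direct translation between memory state and message, so I do not expect any real obstacle.
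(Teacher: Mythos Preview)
Your encoding is essentially the paper's (under the identification $(i,b)\leftrightarrow 2i-b$ your sets coincide with the paper's $S$ and $T$), and the idea of passing the memory state as the single message is exactly right. But there is a genuine gap: your constructed stream has length $2m$ over an alphabet of size $2m$, whereas the duplicates problem, as defined here, is for streams of length $n+1$ over $[n]$. A streaming algorithm that solves the latter is under no obligation to behave correctly on a length-$n$ input, so you cannot feed it your stream and invoke its guarantee. Padding with one extra symbol does not fix this cheaply: any symbol Bob appends is either already one of his $(j,y_j)$ (risking a triple occurrence and violating the ``at most twice'' promise) or is of the form $(j,1-y_j)$, which duplicates Alice's $(j,1-x_j)$ precisely when $x_j=y_j$, producing a spurious duplicate unrelated to any disagreement.

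The paper handles exactly this point with an additional randomized step: using shared randomness, the players pick a uniformly random $P\subseteq[2m]$ of size $m$, Alice feeds $S\cap P$ and Bob feeds $m+1-|S\cap P|$ elements of $T\cap P$, so that the stream has length $m+1$ over the size-$m$ alphabet $P$. They then show $\Pr[|S\cap P|+|T\cap P|\ge m+1]>1/8$, so with constant probability a valid duplicates instance is produced in which every duplicate still corresponds to a disagreeing coordinate (and no element repeats more than twice). Adding this random-restriction step to your reduction is the missing ingredient; once you do, your argument matches the paper's.
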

\begin{proof}
We show our claim by a reduction from the universal relation. Each of
 Alice and Bob is given a binary string of length $n$, respectively $x$ 
and $y$. Further, the players are guaranteed that $x\neq y$. Alice 
sends a message to Bob, after which Bob must output an index 
$i\in[n]$ such that $x_i\neq y_i$. By Theorem~\ref{thm:urlb}, to solve this 
problem with $1/2$ error probability requires $\Omega(\log^2 n)$ bits for one-way communication. 
Alice constructs the set $S=\{2i-1+x_i\mid i\in[n]\}\subseteq[2n]$ and Bob
constructs $T=\{2i-y_i\mid i\in[n]\}\subseteq[2n]$. Observe that $|S|=|T|=n$ 
and $x_i\neq y_i$ if and only if either $2i$ or $2i-1$ is in both $S$ and
$T$.

Next, using the shared randomness, players pick a random subset 
$P$ of $[2n]$ of size $n$. We have
$$\Pr[|S\cap P| + |T\cap P|\geq n+1]>1/8.$$
To see this let $i\in S\cap T$ and $j\in[2n]\setminus(S\cap T)$. We have
$|P\cap\{i,j\}|=1$ with probability more than $1/2$. The sets $P$ satisfying
this can be partitioned into classes of size four by putting $Q\cup\{i\}$,
$Q\cup\{j\}$ and their complements in the same class for any
$Q\subseteq[2n]\setminus\{i,j\}$, $|Q|=n-1$. Clearly, at least one of the four
sets $P$ in each class satisfies $|S\cap P|+|T\cap P|>n$.

Given a streaming 
algorithm $A$ for finding duplicates, Alice feeds the elements of 
$S\cap P$ to $A$ and sends the memory contents over to Bob, 
along with the integer $|S\cap P|$. If $|S\cap P|+|T\cap P|<n+1$, 
Bob outputs FAIL. Otherwise, feeds arbitrary $n+1-|S\cap P|$ 
elements of $T\cap P$ to $A$. Note that no element repeats more than twice.

On the other hand $|P|=n$ and we always give $n+1$ elements of $P$ 
to the algorithm. Also with constant probability, Bob finds an 
$a\in S \cap T$, which in turn reveals an $i$ such that $x_i\neq y_i$. 
Therefore by Theorem \ref{thm:urlb}, any algorithm for finding 
duplicates must use $\Omega(\log^2 n)$ bits.
\end{proof}

\subsection{$L_p$-sampling}
Our algorithm for the duplicates problem (Theorem~\ref{thm:dupub}) is based on
$L_1$-sampling, thus the matching lower bound for the duplicates problem
implies a similar matching bound for the sampling problem. We state this
result here. Notice that the $L_p$ distribution corresponding to $0,\pm1$
vectors are independent of $p$, so $p$ does not have to be specified for the
next theorem.

\begin{theorem}\label{lpl}
Any one pass
$L_p$-sampler with an output distribution,
whose variation distance from the $L_p$ distribution corresponding to $x$ is
at most $1/3$, requires $\Omega(\log^2n)$ bits of memory. This holds even
when all the coodinates of $x$
are
guaranteed to be $-1$, $0$ or $1$.

For constants $\delta<1$ and $\epsilon<1$ the same lower bound holds for any
$\epsilon$ relative error $L_p$-sampler with failure probability $\delta$.
\end{theorem}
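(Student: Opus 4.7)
The plan is to reduce from the universal relation problem $\URn$, whose one-round randomized complexity is $\Omega(\log^2 n)$ by Theorem~\ref{thm:urlb}. Given an instance $(x,y) \in \{0,1\}^n \times \{0,1\}^n$ of $\URn$ (with the promise $x \ne y$), Alice and Bob will simulate the streaming $L_p$-sampler on the signed difference vector $v = x - y \in \{-1,0,1\}^n$: Alice feeds the updates $\{(i,1) : x_i = 1\}$ into the sampler, transmits its memory state to Bob, who continues with the updates $\{(i,-1) : y_i = 1\}$. At the end, $v_i \ne 0$ precisely at coordinates where $x$ and $y$ differ, so any index in the support of $v$ is a valid answer to $\URn$.

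The key observation is that, since $|v_i|^p = 1$ on the support of $v$, the $L_p$ distribution corresponding to $v$ is the uniform distribution on $\{i : x_i \ne y_i\}$, regardless of $p$ (the same holds by definition for the $p=0$ case). Consequently, if the sampler's output distribution has total variation distance at most $1/3$ from this distribution, then by the standard bound $Q(S) \ge P(S) - d_{TV}(Q,P)$ applied to $S = \mathrm{supp}(v)$, the sampler outputs an index in the support of $v$ with probability at least $2/3$. Bob reports this index as his $\URn$ answer, obtaining a one-round protocol with error $\le 1/3$ whose communication is exactly the sampler's memory size. Theorem~\ref{thm:urlb} then yields the claimed $\Omega(\log^2 n)$ space lower bound.

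For the $\epsilon$ relative error variant, the plan is to derive the same \emph{support-hit} guarantee: conditioned on not failing, the sampler outputs each support coordinate with probability $(1\pm\epsilon)/k + O(n^{-c})$, where $k = \|v\|_0$, so the unconditional probability of producing an index in $\mathrm{supp}(v)$ is at least $(1-\delta)(1-\epsilon) - O(n^{-c+1})$. For any fixed constants $\delta, \epsilon < 1$ this is a positive constant strictly less than $1$, yielding a $\URn$ protocol with error bounded away from $1$ by a constant, and Theorem~\ref{thm:urlb} again gives $\Omega(\log^2 n)$ bits. The main point requiring a little care---hardly an obstacle once the reduction is in place---is to translate the two different sampler guarantees (variation-distance closeness versus relative error plus failure probability) into the single combinatorial event ``the output lies in $\mathrm{supp}(v)$'' needed by $\URn$; once that common ground is established, all the combinatorial content is outsourced to the universal relation lower bound of the previous subsection, and no new ideas are required.
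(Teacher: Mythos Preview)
Your proof is correct but takes a different route from the paper. The paper reduces from the \emph{duplicates} lower bound (Theorem~\ref{thm:duplb}) rather than directly from $\URn$: it encodes a length-$(n+1)$ stream over $[n]$ as a $\{-1,0,1\}$-vector $x$ with $\sum_i x_i = 1$, so that duplicates correspond precisely to the positive coordinates. Because only positive coordinates solve the duplicates problem, the paper must argue that the $L_p$ distribution places weight more than $1/2$ on positive coordinates (using $\sum x_i = 1$), and then the variation-distance-$1/3$ hypothesis yields only a $1/6$ probability of hitting a duplicate. Your direct $\URn$ reduction via $v = x - y$ sidesteps this asymmetry: since \emph{every} nonzero coordinate of $v$ is a valid $\URn$ answer, the $L_p$ distribution has full mass on the target set and you immediately get success probability $\ge 2/3$. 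Your argument is therefore shorter and yields a better constant; the paper's detour through duplicates is natural there only because that lower bound has just been established and the authors wish to highlight the connection.
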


\begin{proof}
Consider the $L_1$ sampling algorithm that we used to prove
Theorem~\ref{thm:dupub}. Given a stream of items from $[n]$ we turned it to
an update stream for an $n$ dimensional vector $x$ by first
producing an update $(i,-1)$ for all $i\in[n]$ and then for any letter $i$ in
the stream producing an update $(i,1)$. Assuming that no item appears more
than twice in the stream all coordinates of the final vector $x$ are $-1$, $0$
or $1$. The $L_1$ distribution for $x$ puts weight more than $1/2$ on the
coordinates having value $1$. These are the duplicates. Thus if we have
another algorithm such that the variation distance of its output is at most
$1/3$ from this $L_1$ distribution, then it returns a coordinate with value 1
with probability at least $1/6$. For an $\epsilon$ relative error $\delta$
failure probability approximate $L_p$-sampler the same probability is at least
$(1-\epsilon)(1-\delta)-n^{1-c}$. Finding a coordinate in $x$ with value $1$
is the same as finding a duplicate in the original stream, so we need
$\Omega(\log^2n)$ memory by Theorem~\ref{thm:duplb}.
\end{proof}

\subsection{Heavy Hitters}\label{sec:hh}
The heavy hitters problem in the streaming model is defined as follows. Let
$x$ be an $n$-dimensional integer vector given by an update stream.
A heavy hitters algorithm with parameters $p>0$ and $\phi>0$ is
required to output a set $S\subseteq [n]$ that contains
all $i$ with $|x_i|\geq \phi\|x\|_p$ and no $i$ such
that $|x_i|\leq \frac{\phi}{2}\|x\|_p$. We call such $S$ a valid heavy hitter
set.\footnote{ In general, the parameter $\frac12\phi$
can be replaced by any $\epsilon < \phi$. 
 Since here our 
focus is on lower bounds, we have simplified the definition.} 
In this part, we show a tight lower bound for the
 space complexity of randomized algorithms (assuming
constant probability of error) for 
the heavy hitter problem. First we briefly review the upper bounds.

The count-median algorithm from \cite{CormodeM05} gives
 a $O(\phi^{-1}\log^2 n)$ space upper bound for the case of $p=1$.
Here we note the count-sketch \cite{CharikarCF04} in fact 
gives a $O(\phi^{-p}\log^2 n)$ space upper bound for all $p \in (0,2]$.
The case of $p=2$ easily follows from Lemma~\ref{c-s}. Let $d=\err^m_2(x)/m^{1/2}$.
In general it holds $d \le ||x||_p/m^{1/p}$ for any
$p\in(0,2]$. Indeed, let $H \subset [n]$ be the set of indices for which
$ d^2=\sum_{i\in H}x_i^2/m$ and let $c=\max_{i\in H}|x_i|$. Then we have
$||x||_p^p/m=\sum_{i\in[n]}|x_i|^p/m\ge c^p+\sum_{i\in H}|x_i|^p/m\ge
c^p+c^{p-2}\sum_{i\in
H}x_i^2/m=c^p+c^{p-2}d^2\ge c^p((1-p/2)+(p/2)c^{-2}d^2\ge
c^p(c^{-2}d^2)^{p/2}=d^p.$ Therefore setting $m=1/\phi^p$ in the count-sketch
scheme gives the desired result.

We remark that a similar upper bound for the heavy hitter problem is 
shown in \cite{KaneNPW} (cf. Theorem 1), albeit via different arguments.  
%
%
In the next theorem, we show that the above upper bound is tight for
any reasonable range of parameters. Our lower bound holds even in the strict
turnstile model and even for very short streams.

\begin{theorem}\label{hhl} Let $p>0$ and $\phi\in(0,1)$ be a reals. Any one
  pass heavy hitter algorithm in the strict turnstile model uses
  $\Omega(\phi^{-p}\log^2n)$.
\end{theorem}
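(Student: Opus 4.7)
The plan is to reduce augmented indexing to the heavy hitters problem, following the same template used in Theorem~\ref{thm:urlb}. I will embed Alice's augmented indexing string into a turnstile vector $x$ whose coordinate set $[n]$ is partitioned into $s = \Theta(\log n)$ disjoint \emph{levels}, each of size $k = n/s$, with level weights $w_j = 2^{s-j}$ chosen to be polynomially bounded. Alice's input is interpreted as $z = (z_1,\ldots,z_s) \in \binom{[k]}{L}^s$ where $L = \lfloor c_p\phi^{-p}\rfloor$ for a small constant $c_p > 0$ to be pinned down below; at level $j$, she feeds the update $(+w_j)$ to each coordinate $(j-1)k+\ell$ with $\ell \in z_j$.

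After Alice sends the memory state of the heavy hitter algorithm to Bob, Bob uses his prefix $\{z_j : j < i\}$ to cancel the contributions of levels below $i$ via the corresponding negative updates, then finalizes the algorithm and queries the heavy hitter set $S$. The resulting vector has all nonnegative entries --- cancelled levels are zero and higher levels are untouched by Bob --- so we remain in the strict turnstile model. Bob then reads $z_i$ off by intersecting $S$ with the $k$ coordinates allocated to level $i$.

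The crux is to verify that the $L$ level-$i$ items are genuinely heavy with respect to the residual vector. A direct geometric-series computation gives
\[
\|x\|_p^p \;=\; \sum_{j=i}^{s} L\, w_j^p \;=\; L\cdot\frac{2^{p(s-i+1)}-1}{2^p-1} \;\le\; \frac{2^p}{2^p - 1}\, L\, w_i^p,
\]
so the requirement $w_i^p \ge \phi^p \|x\|_p^p$ boils down to $L \le (1 - 2^{-p})/\phi^p$, and any choice $c_p < 1 - 2^{-p}$ makes this strict. Since distinct levels occupy disjoint coordinate ranges and every unchosen coordinate at level $i$ is zero, $S$ restricted to level $i$'s range is exactly the shifted copy of $z_i$, so Bob recovers his target symbol whenever the heavy hitter algorithm succeeds. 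Invoking Lemma~\ref{lem:ai} with $m = s$ and alphabet size $\binom{k}{L}$, Alice's message --- which equals the streaming algorithm's memory --- must have length $\Omega\bigl(s \log \binom{k}{L}\bigr) = \Omega(sL\log(k/L))$; substituting $s = \Theta(\log n)$ and $\log(k/L) = \Theta(\log n)$ (valid as long as $\phi^{-p} = n^{1-\Omega(1)}$, which covers the interesting range) delivers the claimed $\Omega(\phi^{-p}\log^2 n)$.

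I expect the main technical care to lie in the heaviness calculation above: $c_p$ must be chosen strictly below $1 - 2^{-p}$ so that level-$i$ items \emph{strictly} exceed $\phi\|x\|_p$, rather than landing on the boundary of the ``gray zone'' $(\phi/2,\phi)\|x\|_p$ on which the algorithm's output is unconstrained. Note that items at levels $j > i$ will typically fall inside the gray zone (since $w_{i+1} = w_i/2$), but this is harmless because Bob only examines level $i$'s coordinate range. A secondary nuisance is the dependence of $c_p$ on $p$ (particularly for $p$ near $0$, where $1 - 2^{-p}$ is small), but there $\phi^{-p}$ is close to $1$ anyway so the bound stays consistent.
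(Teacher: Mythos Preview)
Your proof is correct, but the route differs from the paper's in how the $\phi^{-p}$ factor is extracted. The paper places a \emph{single} coordinate at each of $s=\Theta(\phi^{-p}\log n)$ levels and lets the weights grow by a factor $b=(1-(2\phi)^p)^{-1/p}$ that depends on $\phi$ and is only barely larger than $1$; this slow growth is what allows $\Theta(\phi^{-p}\log n)$ levels while keeping $b^s$ polynomial. Bob then simply reads the smallest index in $S$. You instead fix the weight ratio at $2$, use only $s=\Theta(\log n)$ levels, and pack $L=\Theta(\phi^{-p})$ coordinates into each level, recovering the $\phi^{-p}$ factor from $\log\binom{k}{L}$ rather than from the number of levels. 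Your heaviness computation via the geometric series and the constraint $c_p<1-2^{-p}$ is clean and correct, and your observation that non-chosen level-$i$ coordinates are zero (hence never in $S$) handles the recovery. Both constructions need a mild upper bound on $\phi^{-p}$ relative to $n$ (the paper implicitly needs $\phi^{-p}\lesssim\sqrt n/\log n$ for $n'=s2^t<n$, comparable to your $\phi^{-p}=n^{1-\Omega(1)}$), so neither is strictly more general. The tradeoff is that the paper's version has a slightly simpler alphabet and recovery step, while yours avoids the $\phi$-dependent growth base and the ceiling-function bookkeeping that comes with it.
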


\begin{proof} Suppose there is a one pass heavy hitter algorithm for
  parameters $p$ and $\phi$. We allow for a
  random oracle and assume the updates are polynomially bounded in $n$ and
  integers. We can also restrict the number of updates to be $O(\phi^{-p}\log
  n)$ and assume all coordinates of the final vector are positive (strict
  turnstile model). We turn this streaming algorithm into a protocol for
  augmented indexing in a similar way as we transformed the protocol for
  $\URn$ to a protocol for augmented indexing in the proof of
  Theorem~\ref{thm:urlb}. The exponential growth is now achieved not by
  repetition but by multiplying the coordinates with a growing factor.

Suppose Alice and Bob wants to
solve the augmented indexing problem and Alice receives $y\in[2^t]^s$ and Bob
gets $i\in [s]$ and $y_j$ for $j<i$. Let them construct real vectors $u$ and
$v$ as follows. Let $b=(1-(2\phi)^p)^{-1/p}$ and let $e_q\in\mathbb R^{2^t}$
be the standard unit vector in the direction of coordinate $q$. Alice obtains
$u$ by concatenating the vectors $\lceil b^{s-j}\rceil e_{y_j}$ for
$j\in[s]$. The dimension of $u$ is $n'=s2^t$. Bob
obtains $v$ by concatenating the same vectors for $j\in[i-1]$ and then
concatenating enough zeros, namely $(s-i+1)2^t$, to reach the same dimension
$n'$. Now Alice and Bob perform the heavy hitter algorithm for the vector
$x=u-v$ as follows. Alice generates the necessary updates to increase the
initially zero vector $x\in\mathbb Z^n$ to reach $x=u$, maintains the memory
content throughout these updates and sends the final content to Bob. Now Bob
generates the necessary updates to decrease $x=u$ to its final value $x=u-v$
and maintains the memory throughout. Finally Bob learns the heavy hitter set
$S$ the streaming algorithm produces and outputs $z\in[2^t]$ if the smallest
index in $S$ is $(i-1)2^t+z$.

We claim that the above protocol errs only if the streaming algorithm makes an
error. Notice that all coordinates of $x_l$ of $x=u-v$ are zero except the
ones of the form $x_{l_j}=\lceil b^{s-j}\rceil$ for $l_j=(j-1)2^t+y_j$, where
$i\le j\le s$. Thus $x_{l_i}$ is the first non-zero coordinate. So the claim
is true if $x_{l_i}\ge\phi\|x\|_p$. Using $\lceil v\rceil<2v$ for $v\ge1$ we
get exactly this:
\begin{align*}
\phi^p\|x\|_p^p &= \phi^p\sum_{j=i}^s\lceil b^{s-j}\rceil^p\\
&<(2\phi)^pb^{p(s-i+1)}/(b^p-1)\\
&=b^{p(s-i)} \quad\qquad\text{(since $b^p =1/(1-(2\phi)^p)$)}\\
&\le x_{l_i}^p
\end{align*}

Let us now choose $s=\lceil(2\phi)^{-p}\log n\rceil$ and $t=\lceil\log
n/2\rceil$. For large enough $n$ this gives $n'=s2^t<n$ and all coordinates of
$x$ throughout the procedure remain under $n$. Still if the streaming
algorithm works with probability over 1/2, then  by Lemma~\ref{lem:ai} the message
size of the devised protocol is $\Omega(st)=\Omega(\phi^{-p}\log^2n)$. This
proves the theorem as the message size of the protocol is the same as the
memory size of the streaming algorithm.
\end{proof}


{\small
\bibliography{p17}{}}
\bibliographystyle{plain}

%
%
\newpage
\appendix

\section{Appendix}\label{sec:appx}

\subsection{Missing proofs}

\begin{proof}[Proof of Proposition~\ref{thm:urub}] (sketch) One way to deduce the one round protocol is from
  Theorem~\ref{l0}. Alice and Bob run a single pass $L_0$-sampling algorithm
  on $x-y$. This can be achieved by a single message from Alice to Bob
  containing the memory after the first set of updates as in the proof of
  Theorem~\ref{hhl}. The sample $i$ Bob finds is an (almost uniform random)
  index with $x_i\ne y_i$.

Looking more closely to this algorithm we have presented, it finds an index
where $x$ and $y$ disagree from some set $I\subseteq[n]$ that contains at
least one, but not too many such indices. It tries $O(\log n)$ random sets so
that one of them works. One can obtain the two round
protocol by finding such a set in the first round and concentrating on a
single such set in the second round.
\end{proof}

\begin{proof}[Proof of Lemma~\ref{aver}] Using the joint random source the players take a uniform random
  permutation $\pi$ of $[n]$ and use it to permute the digits of $x$ and
  $y$. Further they take a uniform random subset $S\subseteq[n]$ and flip the
  digits with coordinates in $S$. This requires no communication. 
Then they run the original protocol on the modified inputs and report
$\pi^{-1}(i)$ if the original protocol reports $i$. 
All indices where $x$ and $y$ differ are reported with equal probability by
symmetry.
\end{proof}

\end{document}